\DeclareMathOperator{\polylog}{polylog}
\DeclareMathOperator{\poly}{poly}
\begin{document}

\newcommand{\bE}{\ensuremath{\mathbf{E}}}
\newtheorem{theorem}{Theorem}[section]
\newtheorem{proposition}[theorem]{Proposition}
\newtheorem{lemma}[theorem]{Lemma}
\newtheorem{corollary}[theorem]{Corollary}
\newtheorem{definition}[theorem]{Definition}

\newtheorem*{rep@theorem}{\rep@title}
\newcommand{\newreptheorem}[2]{%
\newenvironment{rep#1}[1]{%
 \def\rep@title{#2 \ref{##1}}%
 \begin{rep@theorem}}%
 {\end{rep@theorem}}}
\makeatother

\newreptheorem{proposition}{Proposition}

\setcounter{page}{0}
\title{Derandomized concentration bounds for polynomials, and hypergraph maximal independent set}
\author{David G. Harris\thanks{Department of Computer Science, University of Maryland, 
College Park, MD 20742.
Email: \texttt{davidgharris29@gmail.com}}}

\maketitle

\begin{abstract}
  A parallel algorithm for maximal independent set (MIS) in hypergraphs has been a long-standing algorithmic challenge, dating back nearly 30 years to a survey of Karp \& Ramachandran (1990). The best randomized parallel algorithm for hypergraphs of fixed rank $r$ was developed by Beame \& Luby (1990) and Kelsen (1992), running in time roughly $(\log n)^{r!}$.

  We improve the randomized algorithm of Kelsen, reducing the runtime to roughly $(\log n)^{2^r}$ and simplifying the analysis through the use of more-modern concentration inequalities. We also give a method for derandomizing concentration bounds for low-degree polynomials, which are the key technical tool used to analyze that algorithm. This leads to a deterministic PRAM algorithm also running in $(\log n)^{2^{r+3}}$ time and $\poly(m,n)$ processors. This is the first deterministic algorithm with sub-polynomial runtime for hypergraphs of rank $r > 3$.

  Our analysis can also apply when $r$ is slowly growing; using this in conjunction with a strategy of Bercea et al. (2015) gives a deterministic MIS algorithm running in time $\exp(O( \frac{\log (mn)}{\log \log (mn)}))$.
\end{abstract}

This is an extended version of a paper appearing in the ACM-SIAM Symposium on Discrete Algorithms (SODA) 2018.

\pagebreak

\section{Introduction}
Let $G = (V,E)$ be a hypergraph of rank $r$ on $n$ vertices and $m$ hyper-edges (a rank-$r$ hypergraph means that every edge has cardinality at most $r$). An \emph{independent set} of $G$ is a subset $I \subseteq V$ such that $e \not \subseteq I$ for all edges $e \in E$; a maximal independent set (MIS) is a set $I$ which is independent, but $I \cup \{v \}$ is not independent for $v \in V \setminus I$. 

It is trivial to find an MIS by a sequential algorithm. For ordinary graphs (with $r = 2$), finding an MIS is a fundamental symmetry-breaking problem in distributed/parallel computing. This problem has a long history, with efficient randomized parallel (RNC) and deterministic parallel (NC) algorithms dating back to Luby \cite{luby}.  Hypergraph MIS, by contrast,  has been a long-standing open challenge problem, going back nearly 30 years to the survey of Karp \& Ramachandran \cite{kr}. Despite its superficial simplicity, and the fact that it has a trivial sequential algorithm, and its similarity to the well-understood graph MIS problem, there have been no general parallel algorithms (NC or RNC). On the other hand there have been no hardness results for this problem either.

In \cite{karp}, Karp et al. described a randomized algorithm with runtime roughly $\sqrt{n}$; this remains the best time complexity of any general randomized algorithm. No deterministic algorithms are known for the general case in $o(n)$ time. A variety of special cases have efficient algorithms: in \cite{luczak}, \L{}uczak \& Szymanska gave an RNC algorithm for linear hypergraphs, in \cite{syoudai}, Syoudai \& Miyano gave an RNC algorithm for hypergraphs with bounded vertex-degree, in \cite{garrido}, Garrido, Kelsen, \& Lingas gave an NC algorithm for hypergraphs of bounded arboricity, and in \cite{dahlhaus}, Dahlaus et al. gave an NC algorithm for hypergraphs of maximum rank $3$.

More relevant for our paper, in \cite{beame-luby} Beame \& Luby gave an RNC algorithm for hypergraphs with maximum rank $r = 3$; this was subsequently extended by Kelsen \cite{kelsen} to cover any fixed value of $r$, using $(\log n)^{c_r}$ time and $\text{poly}(m,n)$ processors for $c_r$ (roughly) of order $r!$. In \cite{bercea}, Bercea et al. gave a more precise analysis showing a runtime of $(\log n)^{\tfrac{1}{2} (r+1)! + O(1)}$ (the constant term is independent of $r$). Bercea et al. used this as a subroutine to get a general randomized algorithm in $\exp( O( \frac{\log m \log \log \log m}{\log \log m}))$ time.  Along similar lines, Kutten et al. \cite{kutten} adapted Kelsen's algorithm to obtain distributed algorithms with approximately the same runtime as the parallel algorithm.

\textbf{Concentration bounds for polynomials and hypergraph MIS.} One strategy for analyzing graph MIS algorithms is to show that the graph degree decreases exponentially. This becomes more complicated for hypergraph MIS algorithms, as the hypergraph degree (suitably defined) is not even guaranteed to decrease monotonically. The main breakthrough of Kelsen \cite{kelsen} was to show that, with high probability, the degree increases slowly in every round. This is almost as good as guaranteeing that the degree decreases.

It is straightforward to show that the \emph{expected} increase in degree is small during the hypergraph MIS algorithm. To show a high-probability bound, Kelsen thus developed a new technical tool: an exponentially-strong concentration bound for polynomials applied to independent random variables. This generalizes the setting of concentration bounds for sums of independent random variables. The bound used in that paper, while adequate for the algorithm analysis, was somewhat crude. Since then, the work of Kelsen has given rise to an extensive line of research showing tighter and more general bounds \cite{vu1, vu2, kim-vu, schudy-sviridenko}. These in turn have found numerous applications in  combinatorics and computer science.
 
\textbf{Deterministic algorithms.} Obtaining deterministic parallel MIS algorithms has proven much more challenging. The algorithm of Dahlaus et al. \cite{dahlhaus} was specialized to $r=3$ and appears impossible to generalize to arbitrary values of $r$. There are no known deterministic algorithms running in polylogarithmic (or even sub-polynomial) time using polynomial processors, even for the case of constant $r > 3$.

In \cite{kelsen}, Kelsen discussed one straightforward approach to derandomize the randomized algorithm by drawing the random bits from a probability space with $k$-wise independence. When $k$ is constant, this space has polynomial size, and so can be searched efficiently. However, this leads to relatively weak concentration bounds, and so the resulting algorithm runs in $n^{\epsilon}$ time and $\poly(n)$ processors for any constant $r$ and $\epsilon > 0$. When $k \geq \Omega(\frac{\log n}{\log \log n})$, the algorithm runs in $\polylog(n)$ time, but requires super-polynomial processor count.

The probabilistic methods underpinning polynomial concentration bounds are similar to those for sums of independent random variables. There have been numerous powerful techniques developed for the latter, much more powerful than $k$-wise-independence for constant $k$. Unfortunately, there are severe technical roadblocks to applying these to non-linear polynomials. To the best of our knowledge, polynomial concentration bounds have not led to any efficient deterministic algorithms.

\subsection{Our contributions}
In Section~\ref{rand-sec}, we give a slightly modified form of Kelsen's randomized algorithm, and show an improved bound on its running time.
\begin{theorem}
\label{Fmain-thm}
There is a randomized parallel algorithm, running in $(\log n)^{2^{r+3}+O(1)}$ expected time and $O(n + m \log n)$ processors, to produce an MIS of a rank-$r$ hypergraph.
\end{theorem}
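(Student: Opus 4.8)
The plan is to analyze a modification of Kelsen's random-marking procedure (detailed in Section~\ref{rand-sec}). Throughout we carry the current hypergraph $G$ together with its degree data: for $2\le j\le r$ let $\Delta_j$ denote the largest number of size-$j$ edges incident to a single vertex (more refined codegrees, indexed by larger vertex sets, also enter the full analysis). The algorithm runs in rounds; in a round it picks a marking probability $p$ depending on the current degrees (roughly $p\approx\Delta_r^{-1/(r-1)}$, up to polylogarithmic corrections), places each surviving vertex independently into a set $S$ with probability $p$, repairs $S$ into an independent set $S'$ by deleting from every edge that lies entirely inside $S$ its lowest-indexed vertex, commits $S'$ to the output $I$, deletes every vertex that has just become blocked (i.e.\ now sits in a singleton edge), and restricts $G$ to the surviving vertices; it stops once no edges remain, adding all remaining vertices to $I$. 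Correctness is immediate: $S'$ is independent after every round by construction, committing it keeps $I$ independent because any straddling edge is one whose non-committed part has shrunk to a singleton whose residual vertex is therefore deleted, and on termination every uncommitted vertex is blocked, so $I$ is maximal.

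First I would carry out the single-round \emph{expectation} analysis: with the stated $p$, a routine computation shows that $\bE[|S'|]$ is a constant fraction of what is affordable --- so the repair destroys only an $o(1)$ fraction of the high-rank edges --- and that each degree moves favorably in expectation: $\bE[\Delta_r]$ shrinks by a genuine polylogarithmic factor, while each $\bE[\Delta_j]$ with $j<r$ shrinks, or at worst grows by only a bounded factor. The substantive step is to upgrade these to high-probability statements, and here I would invoke the concentration inequality for low-degree polynomials on independent inputs that is the probabilistic engine of the paper: the post-round value of each codegree, as a function of the marking indicators $\{[v\in S]\}_v$, is a polynomial of degree at most $r-1$ whose coefficients are controlled by the lower degrees $\Delta_2,\dots,\Delta_{r-1}$; once those are polylogarithmically bounded the polynomial has small influences, and the concentration bound pins the codegree to within a polylogarithmic factor of its mean except with probability $n^{-\omega(1)}$. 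A union bound over all rounds and the $\mathrm{poly}(m,n)$ relevant vertex sets then makes the whole execution follow the expectations, with high probability.

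Next I would set up the recursion on $r$ that bounds the round count. Driving $\Delta_r$ from its initial value (at most $n$) down to $O(1)$ costs $\mathrm{polylog}(n)$ outer rounds, but each such round is legitimate only while $\Delta_2,\dots,\Delta_{r-1}$ lie below a polylogarithmic threshold, and re-establishing that invariant after a round in which those degrees have crept up is itself a rank-$(r-1)$ instance of the same task. Letting $e_r$ be the exponent for which the rank-$r$ procedure runs in $(\log n)^{e_r}$ parallel steps, the accounting gives a recursion of the form $e_r\le 2e_{r-1}+O(1)$ with $e_2=O(1)$, whose solution is $e_r=2^r+O(1)$. This is exactly where the improvement over Kelsen lives: his scheduling of the nested rounds effectively yields $e_r\le r\,e_{r-1}+O(1)$, hence $e_r\approx r!$, whereas arranging the rounds so that descending one rank merely doubles the exponent replaces $r!$ by $2^r$. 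Since each round does $O(n+m)$ work in $O(1)$ depth with standard parallel primitives (the extra $\log n$ factor absorbing the compaction needed for the restriction step), the algorithm uses $O(n+m\log n)$ processors and $(\log n)^{2^r+O(1)}$ time, with high probability.

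The step I expect to be the main obstacle is precisely this interface between the concentration bound and the degree recursion. The polynomial concentration inequality only bites when the marking variables have small influences, which forces the lower-rank degrees to be polylogarithmically bounded; yet every round perturbs those degrees, and if it perturbs them too much, the next round's concentration bound fails or the per-level multiplier of the recursion grows. Choosing $p$ and staging the rounds so that every degree that must stay small does stay small --- while still making genuine progress on $\Delta_r$ and keeping the per-level growth of the time-exponent to a factor of two rather than a factor of the rank --- is the delicate heart of the argument.
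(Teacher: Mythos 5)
Your sketch correctly captures one half of the argument: the polynomial concentration inequality (Theorem~\ref{schudy-thm}) is indeed used to control how much each codegree can \emph{increase} per round through edges migrating down in rank, and a union bound over rounds and vertex sets makes this hold throughout the execution. But the mechanism you propose for the \emph{decrease} is where the gap lies. You assert that each degree "moves favorably in expectation" and that concentration then "pins the codegree to within a polylogarithmic factor of its mean." The expectation statement is true, but the shrinkage of $\bE[D_j(X)]$ is dominated by the \emph{collapse} phenomenon: some $Y\in N_k(X)$ is wholly absorbed into the independent set, so $X$ (or a subset of it) becomes an edge and every edge strictly containing $X$ is then deleted as a nested edge, sending $D_j(X)$ to $0$ for all $j$ simultaneously. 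Per round this occurs only with probability $\Theta(v^{-k}D_k(X))$, which is polylogarithmically small, so conditional on no collapse the degree barely moves. The post-round degree is therefore bimodal, not concentrated around a shrinking mean, and no concentration inequality can pin it there; the paper itself stresses that "the change due to collapse is an all-or-nothing phenomenon, whereas the change due to edge-migration is concentrated." The actual argument (Propositions~\ref{Fcollapse-prop} and \ref{Fbound-prop}) amplifies the per-round collapse probability over a long window of consecutive rounds in which the degree would otherwise have to stay large, and separately bounds the migration influx during that window. Relatedly, your algorithm omits the step that makes collapse useful at all: the residual graph must be \emph{reduced} (every nested edge $e\subsetneq e'$ deleted), not merely restricted to surviving vertices with blocked singletons pruned; the paper flags this seemingly inconsequential step as crucial.

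The second gap is the source of the exponent $2^r$. You posit a recursion $e_r\le 2e_{r-1}+O(1)$ from nesting rank-$(r-1)$ "repair" phases inside rank-$r$ rounds, and you acknowledge that justifying the multiplier $2$ rather than $r$ is the delicate point --- but you give no argument for it. In the paper there is no recursion on rank at all: a single family of constraints $D_j(X)\le v^j(\log n)^{-f(j+|X|)}$, governed by one scalar $v$ and a fixed schedule $p=1/v$ (rather than a degree-adaptive $p$), is maintained for all $j$ and $X$ simultaneously. The doubling arises from balancing the length of the collapse window against the migration influx during it, which forces $f(l+1)\ge 2f(l)+3.03$, hence $f(l)\approx 2^l$ and $T=(\log n)^{2^r}$ rounds per phase. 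Without the collapse/reduction machinery and this quantitative balance, the proposal does not establish progress, so the step you flag as the "main obstacle" is a genuine missing idea rather than a technicality.
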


There are two main ingredients to this improvement. First, we use a concentration inequality due to Schudy \& Sviridenko \cite{schudy-sviridenko}, which is much tighter than the bounds originally developed by Kelsen. Second, we use an alternate degree statistic to measure the algorithm's progress. This measure is defined in terms of a single scalar value, which is used globally to bound the degrees throughout the graph. In addition to better running time, this substantially simplifies the analysis of \cite{beame-luby, bercea, kelsen}, which used multiple, interlocking potential functions.

Our second major contribution is to derandomize this algorithm:
\begin{theorem}
\label{Fdet-thm-overall}
There is a deterministic algorithm, running in $(\log n)^{2^{r+3} + O(1)}$ time and $\poly(m,n)$ processors, to produce an MIS of a rank-$r$ hypergraph.
\end{theorem}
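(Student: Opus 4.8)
The algorithm of Theorem~\ref{Fmain-thm} runs in $N = (\log n)^{2^r+O(1)}$ elementary rounds, and the only randomness in a round is a marking vector $X \in \{0,1\}^{V'}$ on the surviving vertex set $V'$, with the coordinates $X_v$ drawn independently as $\mathrm{Bern}(p)$ for a suitable bias $p$; the marked vertices that form a valid partial independent set are then committed and their neighborhoods deleted. The correctness analysis behind Theorem~\ref{Fmain-thm} certifies, via the polynomial concentration bound of Schudy \& Sviridenko applied to a polynomial $f$ of degree at most $r$ in the variables $X_v$ that tracks the change in the scalar potential $\Phi$, that with probability $1 - n^{-\Omega(1)}$ the potential drops by the required amount in that round. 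The plan is to replace, round by round, the random draw of $X$ by a deterministic parallel search for a marking vector achieving the same potential drop; since there are only polylogarithmically many rounds it suffices to implement one such search in $(\log n)^{O(1)}$ depth and $(mn)^{O(1)}$ work, after which the deterministic guarantees compose exactly as in Theorem~\ref{Fmain-thm}.

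For one round I would use the method of conditional expectations driven by a pessimistic estimator read off from the concentration proof. Let $\mu = \bE[f(X)]$ and let $k$ be the (even) moment order that Schudy \& Sviridenko use to reach the required tail bound, so $k$ is polylogarithmic in $n$; take the estimator
\[
U(X) \;=\; \bigl(f(X) - \mu\bigr)^{k}.
\]
By the moment bound, $\bE[U(X)] \le \tau^{k}$, where $\tau$ is the deviation threshold at which the round fails, whereas $U(X) \ge \tau^{k}$ whenever the round fails on the marking $X$; hence any $X$ with $U(X) < \tau^{k}$ is a good marking. Crucially, although $U$ expands into a polynomial of degree $\le rk$ in the $X_v$, it can be \emph{evaluated} cheaply: one computes $f(X)$, a sum of $(mn)^{O(1)}$ monomials of degree $\le r$, and raises $f(X)-\mu$ to the $k$-th power, which is an $\mathrm{NC}$ computation.

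To find a good $X$ deterministically and in parallel I would search over a space $\mathcal{D}$ of $rk$-wise independent biased bits: since $U$ has degree $\le rk$ we get $\bE_{X \sim \mathcal{D}}[U(X)] = \bE[U(X)] \le \tau^{k}$, so $\mathcal{D}$ contains a good marking, located by evaluating $U$ at every point of $\mathcal{D}$ in parallel. The step I expect to be the real obstacle is that a generic $rk$-wise independent space has size $|V'|^{\Theta(rk)} = n^{\Theta(r\log n)}$, which is merely quasi-polynomial, whereas Theorem~\ref{Fdet-thm-overall} demands $(mn)^{O(1)}$ processors; using full $rk$-wise independence in one shot therefore does not suffice. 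Overcoming this is where the genuinely new work lies, and I would attempt it in one of two ways: (i) first coarsen the round — bucketing and sparsifying the relevant sub-hypergraph so that the polynomial $f$ depends on only $n^{o(1)}$ ``active'' variables, shrinking $|\mathcal{D}|$ to a tolerable size; or (ii) derandomize the degree-$r$ concentration statement recursively, peeling one monomial factor at a time so that a degree-$r$ search reduces to boundedly many degree-$(r{-}1)$ searches each needing only modest independence, and then stitching the partial assignments together — this mirrors the inductive structure of the Kim--Vu/Schudy--Sviridenko proofs and is presumably responsible for the blow-up from the exponent $2^{r}$ in Theorem~\ref{Fmain-thm} to $2^{r+3}$ here. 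The remaining technical burden is to control the accumulation of estimator error across the $O(r)$ levels of the recursion and to verify that each level's conditional-expectation search runs in $(\log n)^{O(1)}$ depth; granting this, the claimed time and processor bounds follow by multiplying the per-round cost by the $(\log n)^{2^r+O(1)}$ round count.
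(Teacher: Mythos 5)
Your plan correctly identifies the generic template (a pessimistic estimator built from a high moment, plus conditional expectations or exhaustive search over a small sample space), and you correctly locate the central obstruction: an $rk$-wise independent space with $k$ polylogarithmic has quasi-polynomial size. But you then leave that obstruction unresolved --- your items (i) and (ii) are speculative directions, not arguments --- and this is exactly where the content of the proof lies. The paper's resolution is the Alon--Srinivasan bit-level decomposition: the marking vector is revealed in $s \approx \log_2 v$ stages, each stage drawn from a polynomial-size $O(\log m)$-wise \emph{approximately} independent space, with conditional expectations applied stage by stage. This immediately hits the problem your sketch does not confront: for a non-linear $f$, the conditional expectation of $(f-\mu)^{k}$ after fixing some bit-levels cannot be computed with polynomial work (the expansion into $w$-tuples of monomials has no product structure, unlike the linear case). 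The paper's substitute is the surrogate potential $(\bE[f]+u)^{w}$, whose validity as a pessimistic estimator requires simultaneously controlling \emph{all} partial derivatives of $f$ --- concretely, the quantities $H^{i}(Y)$ for every subset $Y$ of every edge, each with its own potential summand $\Psi_Y$ --- and these are mutually interdependent, with a multiplicative loss at every stage that must be budgeted across all $s$ stages and all $T$ rounds. Your direction (ii) gestures at this recursion on derivatives but supplies no mechanism for the error accounting, which is the hard part.

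There is a second, independent gap: the randomized analysis rests on two phenomena, edge migration (a concentration statement, which your estimator is aimed at) and collapse (the event that some $Y \in N_k(X)$ is wholly added to the independent set, forcing $N_j(X)=\emptyset$ thereafter). Collapse is not a tail bound for a polynomial; it is a \emph{lower} bound on the probability of an all-or-nothing event sustained over $\tau$ consecutive rounds, and its derandomization needs a different, multi-round pessimistic estimator (the summands (S2), (S3) built from $h^{t}(k,X)$, whose conditional expectations must be computable \emph{exactly} --- which is why the sample space must also satisfy the exact-independence condition (Q2) on edge subsets, a requirement your single small-bias space does not meet). Your round-by-round framing, in which each round independently certifies a fixed potential drop, has no counterpart for this: degrees do not decrease monotonically, and without the collapse estimator the migration bounds alone cannot prevent $D_j(X)$ from growing, so the potential argument does not close.
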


This gives the first NC algorithm for arbitrary fixed $r$. The exponent in the running time matches that of the randomized algorithm while the exponent for the processor count is truly constant, not depending on $r$.

The technical core of Theorem~\ref{Fdet-thm-overall} (discussed in Section~\ref{Fedge-migrate-sec-det}) is a derandomization of polynomial concentration bounds. This is based on a potential function which serves as a pessimistic estimator for the bad event that the polynomial deviates significantly from its mean.  To summarize briefly, consider a polynomial $S(x_1, \dots, x_n)$ and independent Bernoulli random variables $X_1, \dots, X_n$, and let $Y = S(X_1, \dots, X_n)$. Many concentration bounds (for polynomials or for sums of independent random variables) are based on applying Markov's inequality to the random variable $Y^w$  for some suitably large (polylogarithmic) value of $w$. Thus, a natural potential function to bound the random variable $Y$ would be $\Phi(X_1, \dots, X_n) = \bE[Y^w]$ --- here, $\bE[Y^w]$ represents the \emph{conditional} expected value of $Y^w$, given that certain input values have been fixed.

When $S$ is non-linear, it appears to be impossible to efficiently compute $\bE[Y^w]$, so it cannot be used directly as a potential function. We use an alternative potential function in this situation; although the precise formula is complex, we can sketch the intuition behind it. The quantity $Y^w = S(X_1, \dots, X_n)^w$ can be regarded as a polynomial, in which the highest-degree terms correspond to sets of $w$ \emph{disjoint} monomials of $S$. In the cases of interest to us, these dominate the lower-degree terms. We therefore use as our potential function
$$
\Phi(X_1, \dots, X_n) = (\bE[Y] + \hat \delta)^w;
$$
where $\hat \delta$ is a crude upper bound on the lower-degree monomials, which, crucially, does not depend on the values of $X_1, \dots, X_n$. This potential function $\Phi$ \emph{can} be computed efficiently.

This approximation is only accurate when the partial derivatives of $S$ are relatively small. Thus, in parallel to maintaining concentration bounds for $S$, we will maintain concentration bounds for all its derivatives. These partial derivatives are themselves polynomials applied to independent input variables, so this procedure must be repeated recursively. Each stage of the recursion incurs a small error term. Keeping track of all these concentrations (in a mutually dependent way) is the most technically challenging part of our derandomization.

In Section~\ref{Fsparse-sec}, we leverage Theorems~\ref{Fmain-thm} and \ref{Fdet-thm-overall} to give new parallel algorithms for sparse graphs. This improves on the algorithm of \cite{bercea} in two ways: it is deterministic, and it is faster.
\begin{theorem}
\label{Fth-sbl1}
There is a randomized algorithm time for hypergraph MIS using  $\exp(O( \frac{\log (mn)}{\log \log (mn)}))$ expected time and $O(n + m \log n)$ processors. There is a deterministic algorithm for hypergraph MIS in $\exp(O( \frac{\log (mn)}{\log \log(mn)}))$ time and $\poly(m,n)$ processors.
\end{theorem}

Finally, in Section~\ref{conclude-sec} we conclude with discussion of limitations and further directions for the analysis of hypergraph MIS and derandomized polynomial concentration bounds. We hope that our derandomization method can lead to further deterministic algorithms via polynomial concentration.

\subsection{Notation}
We let $m$ denote the number of edges and $n$ the number of vertices in $G$. We assume throughout that $n$ is larger than any specified constants; in particular, we use without further comment any inequalities which only hold for sufficiently large $n$.

We use the Iverson notation, where $[\mathcal P]$ is the indicator function that any boolean predicate $\mathcal P$ holds. We use $\log x$ to denote the natural logarithm of $x$, and $\exp(x)$ to denote the exponentiation with base $e = 2.718...$.

We say an event $\mathcal E$ occurs \emph{with very low probability} (abbreviated \emph{wvlp}) if $\Pr(\mathcal E) < e^{-\Omega((\log m)^{1.01})}$, and we say that $\mathcal E$ occurs \emph{with very high probability (wvhp)} if $\neg \mathcal E$ occurs wvlp.

Given any collection of sets $\mathcal U$, we define the \emph{core} of $\mathcal U$ to be the inclusion-wise minimal sets of $\mathcal U$, i.e. $\text{core}(\mathcal U) = \{ U \in \mathcal U \mid U \not \subset W  \text{ for all $W \in \mathcal U \setminus \{ U \}$} \}$. For any hypergraph $G = (V,E)$ and independent set $I$ for $G$, we define the \emph{residual hypergraph of $G$ with respect to $I$}, which we write $G^{\text{res}}_I$, to be the hypergraph on vertex set $V$ with edge set $E' = \text{core} \bigl(  \{ e \setminus I \mid e \in E  \} \bigr)$. To put it less formally, we get $G^{\text{res}}_I$ from $G$ by removing the vertices of $I$, and then removing any edge which is a strict superset of another edge.

For any independent set $I$ of $G$, a maximal independent set $I' \supseteq I$ corresponds to a maximal independent set of $G'$. The restriction to the core is worth special explanation: if $G'$ contains nested edges $e \subsetneq e'$, then
 $e'$ is redundant: as long as the independent set does not contain $e$, it will automatically not contain $e'$. Thus $e'$ can be discarded without changing the maximal independent sets. This apparently inconsequential step turns out to be crucial to the algorithm.
\section{The randomized algorithm}
\label{rand-sec}
Our algorithm is based on a randomized procedure REDUCE developed by Beame \& Luby \cite{beame-luby} for successively building up an independent set $I$. It takes as input an additional parameter $p \in [0,1]$.

\begin{algorithm}[H]
\centering
\begin{algorithmic}[1]
\State Initialize $I = \emptyset$
\For{$t  = 1, \dots, T = (\log m)^{2^{r+2}}$}
\State Let $G^{(t)} = G^{\text{res}}_I$.
\State{For each vertex $u$, draw an independent Bernoulli-$p$ random variable $C^{(t)}(u)$.}
\State Form the set $A^{(t)} \subseteq V$, wherein $u \in A^{(t)}$ iff $C^{(t)}(u) = 1$ and for every edge $e  \in G^{(t)}$ with $u \in e$, there is some vertex $u' \in e$ with $C^{(t)}(u') = 0$.
\State{Update $I \leftarrow I \cup A^{(t)}$.}
\EndFor
\State Set $G^{(T+1)} = G_I^{\text{res}}$ and output $I$
\end{algorithmic}
\caption{REDUCE($G, p$)}
\end{algorithm}

Less formally, in each round $t$ we first randomly ``mark'' the vertices with probability $p$, and then perform an ``alteration'' step, in which every fully-marked edge of the residual hypergraph unmarks all its vertices.

Bercea et al. \cite{bercea} showed how to implement each individual round of REDUCE  in $\polylog(m,n)$ time and $O(n + m \log n)$ processors, so overall this algorithm uses $(\log m)^{2^{r+2} + O(1)} \polylog(n)$ time and $O(n + m \log n)$ processors.

In analyzing  $\text{REDUCE}(G, p)$, we use the superscript $(t)$ throughout to refer to properties of the hypergraph $G^{(t)}$. We also use the following shorthand: for any vertex set $X \subseteq V$, we define $C^{(t)}(X) = \prod_{x \in X} C^{(t)}(x)$. Thus, a simpler way to write step (5) is that $u \in A^{(t)}$ iff $C^{(t)}(u) = 1$ and $C^{(t)}(e) = 0$ for every each edge $e \in G^{(t)}$ which has $u \in e$.

\subsection{Measuring hypergraph degree}
As in \cite{beame-luby, kelsen}, we will define and track a measure of the hypergraph degree through the multiple rounds of REDUCE. We begin by defining the \emph{neighborhood}.
\begin{definition}[Neighborhood of a set $X$]
  For each $X \subseteq V$ we define the \emph{neighborhood} of $X$ as
  $$
  N(X) = \{ Y \subseteq V \setminus X \mid X \cup Y \in E \}
  $$

  For each $j \geq 0$ we define the \emph{$j$-neighborhood} of $X$ as
$$
N_j(X) = \{ Y \in N(X) \mid  |Y| = j \} = \{ Y \subseteq V \setminus X \mid X \cup Y \in E \wedge |Y| = j \}
$$

We define the \emph{$j$-degree} of $X$ as $D_j(X) = |N_j(X)|$.
\end{definition}

This definition generalizes the usual definition of neighborhood, in that if $G$ is an ordinary graph and $v$ is a vertex with edges $\{v, w_1 \}, \dots, \{v, w_k \}$, then $N_1( \{ x \}) = \{ \{w_1 \}, \dots, \{w_k \} \}$. Note that if $X$ is an edge of $G$, then $N_0(X) = \{ \emptyset \}$, while if $X$ is not an edge of $G$ then $N_0(X) = \emptyset$. 

We will show that, for an appropriate choice of $p$, the ``normalized degree'' of the hypergraph (defined in an appropriate way), reduces by a constant factor. We define this measure in terms of a single scalar parameter $\Delta$ as follows:\footnote{This definition, in particular the choice of function of $g$, may seem unmotivated here. Please see the proof of Proposition~\ref{Fequil-prop1}, and the remark following that proof, for explanation of why $g$ is defined in this way. We contrast this with the definition used in \cite{kelsen}, whose degree condition can be interpreted as  $D_j \leq (\Delta (\log n)^{-f(j+|X|)})^j$ for a different function $f$.
}
\begin{definition}[$\Delta$-constrained hypergraph]
  \label{delta-def}
Given a hypergraph $G$ and a real number $\Delta \geq 0$, we say that $G$ is \emph{$\Delta$-constrained} if for each $X \subseteq V$ with $0 < |X| < r$ and each integer $j \geq 1$, we have
$$
D_j(X) \leq \Delta^j (\log m)^{-g(j+|X|)}
$$
where throughout this paper we define
$$
g(\ell) = 2^{\ell + 2} - 9
$$
for integers $\ell \geq 2$.
\end{definition}

During the evolution of the algorithm, we will not be able to precisely maintain that the hypergraph is $\Delta$-constrained. Instead, we maintain a slightly weaker property we refer to as \emph{$\Delta$-semiconstrained}:
\begin{definition}[$\Delta$-semiconstrained hypergraph]
  \label{delta-def2}
Given a hypergraph $G$ and a real number $\Delta \geq 0$, we say that $G$ is \emph{$\Delta$-semiconstrained} if for each $X \subseteq V$ with $0 < |X| < r$ and each integer $j \geq 1$, we have $D_j(X) \leq 2 \Delta^j (\log m)^{-g(j+|X|)}$.
\end{definition}

\subsection{Algorithm overview}
In this section, we provide an overview of why the REDUCE algorithm works, given that the input hypergraph is $\Delta$-constrained and $p$ is fixed to $p = 1/\Delta$.  The main idea is to show that the intermediate graphs $G^{(t)}$ are $\Delta$-semiconstrained and the final hypergraph $G^{(T+1)}$ is $\tfrac{1}{2} \Delta$-constrained.

To show these desired properties, we analyze the change in $D^{(t)}_j(X)$ over time for some arbitrary $X \subseteq V$ and $j \geq 1$.  We will describe two main proceses that change $D_j(X)$ significantly: \emph{collapse} and \emph{edge-migration}. (There are other processes that can also change $D_j(X)$, but they do not appear to have a significant effect on the algorithm.)

We say that $X$ \emph{collapses at round $t$} if some $Y \in N^{(t)}(X)$ is added to the independent set, i.e. $Y \subseteq A^{(t)}$. In this case, observe that $X$ (or some $Z \subseteq X$) will appear as an edge in $G^{(t+1)}$. Since the edge set of the residual hypergraph $G^{(t')} = G^{\text{res}}_I$ does not contain nested edges, this further implies that $N^{(t')}_j(X) = \emptyset$ for  $t' > t, j \geq 1$.  Thus, the collapse phenomenon tends to \emph{decrease} $D^{(t)}_j(X)$. It is the main way that the algorithm makes progress toward reducing the degree.

On the other hand, we say that a set $Y \in N_k^{(t)}(X)$ \emph{migrates to $N_j^{(t+1)}(X)$} if there is some subset $Z \subseteq Y$ of size $k - j$ with $Z \subseteq A^{(t)}$.  In this case, the new hypergraph $G^{(t+1)}$ has a new smaller edge $X \cup (Y \setminus Z)$, which contributes to $D^{(t+1)}_j(X)$ instead of $D_k^{(t)}(X)$. This can \emph{increase} $D^{(t+1)}_j(X)$, which is fundamentally different from graph MIS algorithms where the degrees decrease monotonically.

For $1 \leq j < k \leq r$, we let $M^{(t)}_{j,k}(X)$ denote the number of sets $Y \in N^{(t)}_k(X)$ migrating to $N^{(t+1)}_j(X)$. This is the only way in which $D_j(X)$ can increase; thus, it holds that
\begin{equation}
  \label{djt-bnd}
  D_j^{(t+1)}(X) \leq D_j^{(t)}(X) + \sum_{k = j+1}^r M^{(t)}_{j,k}(X)
  \end{equation}

There are other factors which can decrease $D_j(X)$ (for example, migration from $N_j(X)$ to $N_i(X)$ for $i < j$); thus, (\ref{djt-bnd}) is just an upper bound. However, all other changes to $D_j(X)$ appear to be much smaller in magnitude than edge-collapse and migration to $D_j(X)$.

Whenever we discuss migration $M_{j,k}^{(t)}(X)$, we implicitly assume that $1 \leq j < k \leq r$ and $1 \leq |X| \leq r - k$. Likewise, when we discuss the degree $D_j^{(t)}(X)$, we implicitly assume that $j \geq 1$ and $1 \leq |X| \leq r - j$.

These two phenomena have very different behaviors. As we will show, the migration to $N_j(X)$ is concentrated, and with high probability it is always small. Thus, it causes $D_j^{(t)}(X)$ to slowly but steadily increase at essentially a constant rate. On the other hand, collapse is an all-or-nothing effect: it only occurs with a low probability but when it does so it causes $D_j^{(t)}(X) = 0$. Also, the collapse phenomenon becomes more likely when $D_j^{(t)}(X)$ is large.

Let us see how these two effects combine to reduce the degree, after an initial burn-in period. If $D_j^{(t)}(X)$ is large for some round $t \gg 1$, then it must have been large for many prior rounds (since $D_j^{(t)}(X)$ can only grow slowly). This is unlikely, because in each such round, it would have a large probability of collapsing. Thus, contrari-wise, $D_j^{(t)}(X)$ is small with high probability. Note that the burn-in period, during which $D_j^{(t)}(X)$ may grow slightly from its initial value of $\Delta^j (\log m)^{-g(j+x)}$, is why we can only expect that the intermediate graphs $G^{(t)}$ are $\Delta$-semiconstrained instead of $\Delta$-constrained.

Let us state formally the two bounds regarding the effects of collapse and edge-migration:
\begin{proposition}
\label{Fcollapse-prop}
Suppose that $G^{(t)}$ is $\Delta$-semiconstrained. Then for any $X \subseteq V$, we have $\Pr( \text{$X$ collapses at round $t$}) \geq  \tfrac{1}{4} \sum_{k=1}^r \Delta^{-k} D^{(t)}_k(X)$
\end{proposition}

\begin{proposition}
\label{Fmigrate-prop2}
Suppose $G^{(t)}$ is $\Delta$-semiconstrained. For every set $X \subseteq V$ and $j \geq 1$ we have $D_j^{(t+1)}(X) \leq D_j^{(t)}(X) + \Delta^j (\log m)^{2.02 - g(j+1+x)}$ wvhp.
\end{proposition}

The proof of Proposition~\ref{Fcollapse-prop} is nearly identical to the analysis of Beame \& Luby \cite{beame-luby}, with slightly different parameters; we defer it to Appendix~\ref{collapse-proofs}. We show Proposition~\ref{Fmigrate-prop2} in Section~\ref{Fedge-migrate-sec}. In Section~\ref{balance-sec}, we analyze the equilibrium behavior between the two processes, showing that the REDUCE algorithm reduces the degree of $G$ by a constant factor. Finally, we show in Section~\ref{Ftogether-sec} how a simple induction on $\Delta$ produces the full MIS.

\subsection{Assumptions on parameter sizes}
Throughout our analysis of REDUCE (and, later, the derandomized version of REDUCE), we make a number of assumptions on the parameter sizes. We will later discuss how to enforce these bounds  by some simple preprocessing steps.
\begin{enumerate}
  \item[(R1)] $n$ is larger than any needed constants.
  \item[(R2)] $n \leq m \leq n^r$
  \item[(R3)] $2 \leq \Delta \leq m^5$, and $\Delta$ is a power of $2$
  \item[(R4)] $2^{r} < \frac{\log n}{\log \log n}$
  \item[(R5)] $p = 1/\Delta$
\end{enumerate}

We let $s = \log_2 \Delta$; by (R3), $s$ is a non-negative integer in the range $1$ to $5 \log_2 m$. The condition (R4) is not intuitive; the explanation for this is that our goal is to get an algorithm running in $(\log n)^{2^{r+3}}$ time; if condition (R4) is violated, then $(\log n)^{2^{r+3}} \geq n$ and this can be achieved by the trivial sequential algorithm.

We also remark that the deterministic algorithm (which we develop later) requires a small amount of slack in its parameters compared to the randomized algorithm. For consistency, the parameters have been selected for the deterministic algorithm. A more careful analysis could show a slightly lower runtime of $(\log n)^{2^r + O(1)}$ for the randomized algorithm. However, we have made no serious effort to optimize our constant terms and we do not pursue this here.
  
We would also like to briefly discuss our proof strategy: since the formulas we encounter rapidly become very complicated, our overriding goal is to \emph{reduce the number of terms.} For this reason, we often prefer to use small constants such as $0.01$ instead of explicit $\epsilon$ quantities. Also,  we often upper-bound constant terms by expressions $(\log m)^{0.01}$; this allow us to collect a number of disparate contributions into a single term of the form $(\log m)^c$ for constant $c$.

\subsection{Edge migration: proof of Proposition~\ref{Fmigrate-prop2}}
\label{Fedge-migrate-sec}
In this section, we show a concentration bound for the random variable $M^{(t)}_{j,k}$. The main innovation of Kelsen \cite{kelsen} was to derive an exponentially-strong bound for this random variable. Since that paper (and partly as a result of that paper), the topic of polynomial concentration bounds has received much more attention. Our analysis will use a result of Schudy \& Sviridenko \cite{schudy-sviridenko}, which we state in a form specialized to Bernoulli random variables:
\begin{theorem}[\cite{schudy-sviridenko}]
\label{schudy-thm}
Let $S(x_1, \dots, x_n)$ be a degree-$q$ polynomial in $n$ variables, of the form 
$$
S(x_1, \dots, x_n) = \sum_{\substack{Z \subseteq [n] \\ |Z| \leq q}} a_Z \prod_{i \in Z} x_i
$$
for non-negative real coefficients $a_Z$.

Let $X_1, \dots, X_n$ be independent random variables wherein each $X_i$ is distributed as Bernoulli-$p_i$. For $\ell = 0, \dots, q$  define
$$
\mu_{\ell} = \max_{\substack{W \subseteq [n] \\ |W| = \ell}} \sum_{Z \supseteq W} a_Z \prod_{i \in Z \setminus W} p_i
$$

Note that $\mu_{\ell}$ can be interpreted as the maximum expected value of any partial $\ell$-order derivative of the polynomial $S$. 

Then for any $\lambda \geq 0$ we have
$$
\Pr( |S(X) - \bE[S(X)]| \geq \lambda) \leq \exp\Bigl( 2 - \min \Bigl( \min_{\ell=1, \dots, q} \frac{\lambda^2}{\mu_0 \mu_{\ell} R^q}, \min_{\ell=1, \dots, q} \bigl(  \frac{\lambda}{\mu_{\ell} R^q} \bigr)^{1/\ell} \Bigr) \Bigr)
$$
where $R \geq 1$ is some universal constant.
\end{theorem}

\begin{proposition}
\label{Fmigrate-prop1}
Suppose that $G^{(t)}$ is $\Delta$-semiconstrained, and let $X \subseteq V$ with $|X| = x \geq 1$. For any $1 \leq j < k$ we have $M^{(t)}_{j,k}(X) \leq c^{k-j} 2^k \Delta^j (\log m)^{1.01(k-j) - g(k+x)}$ wvhp for some constant $c > 0$.
\end{proposition}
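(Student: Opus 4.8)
The plan is to express $M^t_{j,k}(X)$ as a polynomial in the marking indicators and apply Theorem~\ref{schudy-thm}. For a fixed $Y \in N^t_k(X)$, the event that $Y$ migrates to $N^{t+1}_j(X)$ requires that exactly $k-j$ of the vertices of $Y$ are placed into $K$ (so that $X\cup Y$ shrinks to rank $j+x$), with the surviving $X\cup(Y\setminus K)$ actually appearing as an edge. A convenient over-estimate, which is all we need for an upper-tail bound, is to drop the survival conditions on the other vertices and write $M^t_{j,k}(X) \le \sum_{Y} \sum_{W \in \binom{Y}{k-j}} \prod_{w\in W} [A(w)]$, or even just upper bound $[A(w)] \le [C(w)]$ and work with the initial-marking indicators $C(w)$, which are the genuinely independent Bernoulli-$p$ variables. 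So $S := \sum_{Y \in N^t_k(X)} \sum_{W \in \binom{Y}{k-j}} \prod_{w \in W} X_w$ is a degree-$(k-j)$ polynomial with $0/1$ coefficients, and $M^t_{j,k}(X) \le S$ pointwise. (Using $C$ instead of $A$ only inflates the estimate, and the statement's bound has slack built in.)

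Next I would compute the quantities $\mu_i$ of Theorem~\ref{schudy-thm} for this $S$. For $i=0$, $\mu_0 = \bE[S] = \binom{k}{j} D_k^t(X) p^{k-j}$; using $2v$-constrainedness, $D_k^t(X) \le (2v)^k (\log n)^{-f(k+x)}$ and $p = 1/v$, so $\mu_0 \le \binom{k}{j} 2^k v^j (\log n)^{-f(k+x)}$, which already has the shape of the claimed bound. For $i \ge 1$, $\mu_i$ is the maximum over $i$-subsets $Y_0$ of the sum of $\prod p$ over monomials containing $Y_0$; a monomial here is a $(k-j)$-subset of some $Y \in N_k^t(X)$. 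Fixing $i$ marked vertices forces those to lie in some edge $X \cup Y$, and the remaining freedom is: choose the rest of $Y$ (bounded by a degree $D_{k-i'}^{t}(X \cup \text{(some subset)})$-type quantity, again controlled by $2v$-constrainedness applied to a larger base set), times the choice of the remaining $k-j-i$ vertices of the monomial. The upshot should be a bound of the form $\mu_i \le (\text{const})^{k} v^{j} (\log n)^{-f(k+x)} \cdot (\text{something like } v^{-i}\cdot\text{powers})$ — the point being that each $\mu_i$ is smaller than $\mu_0$ by roughly a factor $v^{-i}$ up to $(\log n)$ and combinatorial factors, because constraining more coordinates pins down more of the edge structure.

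Then I would plug into the tail bound with $\lambda$ set to (a constant times) the gap between the claimed bound and $\mu_0$, namely $\lambda \approx c^{k-j} 2^k \binom{k}{j} v^j (\log n)^{1.01(k-j) - f(k+x)}$, so that $\lambda / \mu_0 \gtrsim (\log n)^{1.01(k-j)}$. For each $i \in \{1,\dots,k-j\}$ I need to check that both $\frac{\lambda^2}{\mu_0 \mu_i R^{k-j}}$ and $\bigl(\frac{\lambda}{\mu_i R^{k-j}}\bigr)^{1/i}$ are $\ge \Omega((\log n)^{1.01})$, which makes $P(M^t_{j,k}(X) \ge \lambda) \le P(S - \bE[S] \ge \lambda - \mu_0) \le e^{-\Omega((\log n)^{1.01})}$, i.e.\ the event holds wvhp. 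The term $(\log n)^{1.01(k-j)}$ in $\lambda/\mu_0$ is exactly engineered so that, after taking the $1/i$-th root (with $i \le k-j$), one still retains a full power of $(\log n)^{1.01}$; and the $\lambda^2/(\mu_0\mu_i)$ term is even larger since $\mu_i \ll \mu_0$. The constant $c$ absorbs $R$ and the various combinatorial overheads, and $r \le O(\log\log n)$ guarantees that factors like $2^k$, $c^{k-j}$, $R^{k-j}$ are all $(\log n)^{o(1)}$ and hence harmless.

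The main obstacle I anticipate is the bookkeeping for $\mu_i$: correctly identifying, for a monomial that is a $(k-j)$-subset of an unknown edge-completion $Y$, which degree quantity $D_\bullet(\cdot)$ controls the number of monomials through a fixed $i$-set, and verifying that the resulting exponent of $v$ and of $(\log n)$ — after using $2v$-constrainedness with the exponent $f(j) = 2^j - 3.5$ — genuinely beats $\mu_0$ by enough to survive the $1/i$-th root in the Schudy–Sviridenko bound. In other words, the combinatorial accounting of how much "pinning $i$ coordinates" buys us, and its interaction with the specific choice $f(j) = 2^j - 3.5$, is where the real work lies; everything else is plugging into Theorem~\ref{schudy-thm} and absorbing constants.
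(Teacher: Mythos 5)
Your proposal follows essentially the same route as the paper: bound $M^t_{j,k}(X)$ by the polynomial $S=\sum_{Y\in N_k(X)}\sum_{|Z|=k-j,\,Z\subseteq Y} C(Z)$, compute the $\mu_\ell$ from the $2v$-constrained degree bounds, and choose $\lambda$ so that the second branch of Theorem~\ref{schudy-thm} at $i=k-j$ is the binding condition. One heuristic in your sketch is wrong, though it happens not to matter: you predict $\mu_i\lesssim \mu_0\cdot v^{-i}$, but in fact fixing a set $W$ with $|W|=\ell$ gives $\sum_{Z\supseteq W}a_Z p^{|Z-W|}=p^{k-j-\ell}\binom{k-\ell}{k-j-\ell}D_{k-\ell}(X\cup W)\le 2^{k-\ell}\binom{k-\ell}{k-j-\ell}v^j(\log n)^{-f(k+x)}$ --- the factor $v^{\ell}$ gained from dropping $p^{\ell}$ is exactly cancelled by the weaker degree bound $D_{k-\ell}(X\cup W)\le (2v)^{k-\ell}(\log n)^{-f(k+x)}$, so all the $\mu_\ell$ are of the same order as $\mu_0$ up to combinatorial factors. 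Your argument survives because, as you correctly note, the exponent $1.01(k-j)$ in $\lambda/\mu_0$ is chosen so that the $1/i$-th root still yields $(\log n)^{1.01}$ for every $i\le k-j$; this only needs $\mu_i\lesssim\mu_0$, not $\mu_i\ll\mu_0$, and likewise $\lambda^2/(\mu_0\mu_i)\approx(\log n)^{2.02(k-j)}$ is large enough even with $\mu_i\approx\mu_0$.
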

\begin{proof}
We omit the superscript $(t)$ for readability. Suppose that $Y \in N_k(X)$. A necessary condition for $Y$ to migrate to $N_j(X)$ is for $C(Z) = 1$  for some $Z \subseteq Y$ with $|Z| = k - j$. Thus, if we define
$$
S = \sum_{Y \in N_k(X)} \sum_{\substack{Z \subseteq Y \\ |Z| = k - j}} C(Z) = \sum_{\substack{Z \subseteq V \setminus X \\ |Z| = k-j}} C(Z) D_j(X \cup Z)
$$
then $M_{j,k}(X) \leq S$. Here $S$ is a polynomial of degree $q = k - j$ of the form specified by Theorem~\ref{schudy-thm}, and the independent Bernoulli random variables are $C(z)$ for $z \in V \setminus X$ and the coefficients are $a_Z = D_j(X \cup Z)$ for sets of size $|Z| = q$.

We first show an upper bound on $\mu_\ell$ for each $\ell = 0, \dots, q$. To do so, consider any $W \subseteq V \setminus X$ with $|W| = \ell$. As $G$ is $\Delta$-semiconstrained, we have
\begin{align*}
  \sum_{Z \supseteq W} a_Z \prod_{i \in Z \setminus W} p_i &= p^{q - \ell} \sum_{\substack{Z \supseteq W \\ |Z| = q}} D_j(X \cup Z) = (1/\Delta)^{q - \ell} D_{k-\ell} (X \cup W) \tbinom{k - \ell}{j} \\
  &\leq 2 \tbinom{k-\ell}{j} (1/\Delta)^{q-\ell} \Delta^{k-\ell} (\log m )^{-g(k+x) } \leq 2^{k+1} \Delta^j (\log m)^{-g(k+x)} 
\end{align*}

Thus, we have the bound
\begin{equation}
\label{Fmu-bound}
\mu_{\ell} \leq 2^{k+1} \Delta^j (\log m)^{-g(k+x)}
\end{equation}

We wish to apply Theorem~\ref{schudy-thm} with some choice $\lambda > 0$ to obtain a probability $e^{-\Omega(\log^{1.01} m)}$ that $S(X) \geq \bE[S(X)] + \lambda$. It suffices to show that
$$
\min \Bigl( \min_{i=1, \dots, q} \frac{\lambda^2}{\mu_0 \mu_i R^q}, \min_{i=1, \dots, q} \bigl(  \frac{\lambda}{\mu_i R^q} \bigr)^{1/i} \Bigr) \geq \log^{1.01} m
$$
Substituting the bound (\ref{Fmu-bound}), it suffices to show the following bounds for $i = 1, \dots, q$:
\begin{align*}
\lambda &\geq R^{\tfrac{1}{2}(k-j)} 2^{k+1} \Delta^j (\log m)^{0.505 - g(k+x)} \\
\lambda &\geq R^{k-j} 2^{k+1} \Delta^j (\log m)^{1.01 i - g(k+x)} 
\end{align*}
and so it suffices to satisfy the second condition at $i = k-j$. With this choice of $\lambda,$ we have wvhp 
$$
S \leq \mu_0 +  R^{k-j} 2^{k+1}  \Delta^j (\log m)^{1.01(k-j) - g(k+x)} \leq c^{k-j} 2^k \Delta^j (\log m)^{1.01(k-j) - g(k+x)} 
$$
for some constant $c > 0$.
\end{proof}

\begin{repproposition}{Fmigrate-prop2}
Suppose $G^{(t)}$ is $\Delta$-semiconstrained. For every set $X \subseteq V$ and $j \geq 1$ we have $D_j^{(t+1)}(X) \leq D_j^{(t)}(X) + \Delta^j (\log m)^{2.02 - g(j+1+|X|)}$ wvhp.
\end{repproposition}
\begin{proof}
  By applying  Proposition~\ref{Fmigrate-prop1} for each $k = j+1, \dots, r$ and taking a union bound over all such values of $k$, we see that wvhp $M^{(t)}_{j,k} \leq c^{k-j} 2^k \Delta^j (\log m)^{1.01(k-j) - g(k+x)}$   for all such values of $k$. Now sum over $k > j$ to get:
  \begin{align*}
    D_j^{(t+1)}(X) &\leq D_j^{(t)}(X) + \sum_{k =j+1}^r M^{(t)}_{j,k} \leq D_j^{(t)}(X) + \sum_{k=j+1}^{\infty} c^{k-j} 2^k \Delta^j (\log m)^{1.01(k-j) - g(k+x)}
  \end{align*}
  In this sum, the ratio between the $(k+1)^{\text{th}}$ term and $k^{\text{th}}$ term is
  $$
  2 c (\log m)^{1.01 + g(k+x) - g(k+x+1)} = 2 c (\log m)^{1.01 - 2^{k+x+2}} \leq 2 c (\log m)^{1.01 - 2^{2+1+2}} \leq 1/2
    $$

  Thus, the summands decrease exponentially and the sum can be bounded by twice the summand at $k=j+1$. As $j \leq r \leq \log_2 ( \frac{\log n}{\log \log n})$ this implies that wvhp
  $$
      D_j^{(t+1)}(X) \leq D_j^{(t)}(X) + 2^{j+2}  c \Delta^j (\log m)^{1.01 - g(j+1+x)} \leq D_j^{(t)}(X) + \Delta^j (\log m)^{2.02 - g(j+1+x)}
$$

To complete the proof, take a union bound over all sets $X \subseteq V$ of size at most $r$ and all integers $j = 1, \dots, r$. By our choice of $r$ there are at most $n^{\log \log n} \ll e^{-(\log m)^{1.01}}$ choices for such $X, j$.
\end{proof}

\subsection{The balance between edge-migration and collapse}
\label{balance-sec}
We now analyze the interaction between the two phenomena affecting $D_j(X)$, and how in combination they lead to an overall decrease in $D_j(X)$.

\begin{proposition}
  \label{Fbound-prop}
  Let $X \subseteq V$. Then for any integers $j \geq 1, 1 \leq \tau < t \leq T$, the following event occurs wvlp:
\begin{enumerate}
\item $G^{(i)}$ is $\Delta$-semiconstrained for $i =  t - \tau, \dots, t$
  \item $D_j^{(t+1)}(X) > \frac{\Delta^j (\log m)^{1.01}}{\tau} + \tau \Delta^j (\log m)^{2.02 - g(j+1+|X|)}$.
\end{enumerate}
\end{proposition}
\begin{proof}
  Let $|X| = x $ and let $\gamma =  \frac{\Delta^j (\log m)^{1.01}}{\tau} + \Delta^j \tau (\log m)^{2.02 - g(j+1+x)}$. By Proposition~\ref{Fmigrate-prop1}, it holds wvhp that at any time $t$ where $G^{(t)}$ is $\Delta$-semiconstrained, we have $D_j^{(t+1)} (X) \leq D_j^{(t)}(X) + \Delta^j (\log m)^{2.02 - g(j+1+x)}$. Thus, barring wvlp events, if $D_j^{(t+1)}(X) > \gamma$ then every round $i = t - \tau + 1, \dots, t$ satisfies
  $$
  D_j^{(i)} (X) > \gamma - (t+1-i) \Delta^j (\log m)^{2.02 - g(j+1+x)} \geq \gamma - \tau \Delta^j (\log m)^{2.02 - g(j+1+x)} = \frac{\Delta^j (\log m)^{1.01}}{\tau}
  $$

  In each such round $i$, Proposition~\ref{Fcollapse-prop} shows that $X$ would collapse with probability at least $\tfrac{1}{4} D_j^{(i)} (X) \Delta^{-j}$. Our bound on $D_j^{(i)}$ shows that this is at least $\frac{(\log m)^{1.01}}{4 \tau}$. In order for the stated event to occur, $X$ must avoid collapse at all such times, which would have probability
\[
\Pr( \text{$X$ avoids collapse by round $t$} ) \leq (1-\frac{(\log m)^{1.01}}{4 \tau})^{\tau} \leq \exp(-\tfrac{1}{4} (\log m)^{1.01}) \qedhere
\]
\end{proof}

\begin{proposition}
\label{Fequil-prop1}
Wvhp $G^{(1)}, \dots, G^{(T)}$ are $\Delta$-semiconstrained.
\end{proposition}
\begin{proof}
For any $X \subseteq V$ of size $|X| = x$ with $1 \leq x \leq r$, any integer $j = 1, \dots, r$, and any integer $t = 1, \dots, T$ define the event $\mathcal B(X,j,t)$ as follows:
\begin{enumerate}
\item $G^{(1)}, \dots, G^{(t)}$ are $\Delta$-semiconstrained
\item $D^{(t+1)}_j(X) > 2 \Delta^j (\log m)^{-g(j+|X|)}$
\end{enumerate}

If none of the events $\mathcal B(X,j,t)$ occur, then $G^{(1)}, \dots, G^{(T)}$ are all $\Delta$-semiconstrained. There at most $n^r$ choices for $X$ and $T \leq n$ choices for $t$, and $r \leq O(\log \log m)$ choices for $j$; overall, the total number of choices is at most $e^{O(\log m \log \log m)}$. By a union bound, it thus suffices to show that $\mathcal B(X,j,t)$ occurs wvlp for a fixed $X, j, t$.

So let us fix $X, j, t$ with $|X| = x \geq 1$ and define
$$
\tau = (\log m)^{g(j+x+1) - g(j+x) - 2.02}
$$

If $t \leq \tau$, then note that the input hypergraph $G = G^{(1)}$ is $\Delta$-constrained and by Proposition~\ref{Fmigrate-prop1} $D_j(X)$ increases by at most $(\log m)^{2.02 - g(j+1+x)}$ per round wvhp. So 
\begin{align*}
D_j^{(t)}(X) &\leq \Delta^j (\log m)^{-g(j+x)} + t \Delta^j (\log m)^{2.02 - g(j+1+x)} \leq 2 \Delta^j (\log m)^{-g(j+x)}
\end{align*}
and so the event $\mathcal B(X,j,t)$ does not occur.

Next, suppose that $t > \tau$. Then by Proposition~\ref{Fbound-prop} we have wvhp
\begin{align*}
  D_j^{(t)}(X) &\leq \frac{\Delta^j (\log m)^{1.01}}{\tau} + \tau \Delta^j (\log m)^{2.02 - g(j+1+x)}  \\
  &= \Delta^j (\log m)^{3.03 - g(j+1+x) + g(j+x)} + \Delta^j (\log m)^{-g(j+x)} \\
  &= \Delta^j (\log m)^{-g(j+x)} \bigl( (\log m)^{3.03 - g(j+1+x) + 2 g(j+x)} + 1 \bigr)
\end{align*}

Substituting the value of $g$ we see that $g(j+1+x) + 2 g(j+x) = 2^{j+x+3} - 9 + 2 (2^{j+x+2} - 9) = - 9$, and so
$$
D_j^{(t)}(X) \leq \Delta^j  (\log m)^{-g(j+x)} \bigl( (\log m)^{-5.97} + 1 \bigr) \leq 2 \Delta^j (\log m)^{-g(j+x)}
$$
and so $\mathcal B(X,j,t)$ again does not occur.
\end{proof}

\noindent \textbf{Remark on our choice for the function $g(\ell)$.} In light of Proposition~\ref{Fequil-prop1}, we can discuss our choice for the function $g$ used in Definition~\ref{delta-def}. In order to get the required bound on $D_j^{(t)}$ for the case $ t > \tau$, we need the bound
$$
3.03 - g(j+1+x) + 2 g(j+x) \leq 0
$$
whence we derive that $g(\ell) \geq 2 g(\ell-1) + 3.03$. Solving this recurrence relation, we see that $g(\ell)$ should have the form $g(\ell) = a 2^{\ell} + b$ for constants $a, b$. The precise constant terms used in our definition (namely $g(\ell) = 2^{\ell+2} - 9$) are chosen because the deterministic algorithm will need a slight additional slack in its constraints.

\begin{theorem}
\label{Fdecrease-thm}
If $G$ is $\Delta$-constrained then wvhp, $G^{(T+1)}$ is $\tfrac{1}{2} \Delta$-constrained.
\end{theorem}
\begin{proof}
Consider some set $X \subseteq V$ and integer $j \geq 1$ with $|X| = x$ for $1 \leq x \leq r-j$. Let us set 
  $$
  \tau = \tfrac{1}{2} 2^{-j} (\log m)^{g(j+x+1) - g(j+x) - 2.02}
  $$

Observe that $\tau \leq (\log m)^{2^{j+x+2}} \leq (\log m)^{2^{r+2}} = T$.  By Proposition~\ref{Fequil-prop1}, wvhp the graphs $G^{(1)}, \dots, G^{(T)}$ are all $\Delta$-semiconstrained.  So Proposition~\ref{Fmigrate-prop2} applied at $t = T$ shows that wvhp
  \begin{align*}
  D_j^{(T+1)}(X) &\leq \frac{\Delta^j (\log m)^{1.01}}{\tau} + \tau \Delta^j (\log m)^{2.02 - g(j+1+x)} \\
  &= 2^{j+1} \Delta^j (\log m)^{3.03 + g(j+x) - g(j+x+1)} + 2^{-j-1} \Delta^j (\log m)^{-g(j+x)} \\
  &= (\tfrac{\Delta}{2})^j (\log m)^{-g(j+x)} \Bigl( 2^{2j +2} (\log m)^{3.03 + 2 g(j+x) - g(j+x-1)} + \tfrac{1}{2} \Bigr)
  \end{align*}

As $j \leq r \leq \log_2 \bigl( \frac{\log n}{\log \log n} \bigr)$, we have $2^{2j + 2} \leq (\log m)^2$. Substituting the value of $g$ shows that $3.03 + 2 g(j+x) - g(j+x-1) = -5.97$. Therefore, wvhp we have
$$
D_j^{(T+1)}(X) \leq (\tfrac{\Delta}{2})^j (\log m)^{-g(j+x)} \bigl( (\log m)^{2 - 5.97} + \tfrac{1}{2} \bigr) \leq (\tfrac{\Delta}{2})^j (\log m)^{-g(j+x)}
$$

Finally take a union bound over sets $X$ of size at most $r$ and integers $j$.
\end{proof}

\subsection{Producing the full MIS}
\label{Ftogether-sec}
So far, we have studied a single application of REDUCE, where $p$ is fixed to $1/\Delta$. The full MIS algorithm allows $p$ to change over time.

{
\renewcommand{\thetheorem}{\ref{Fdet-thm-overall}}
\begin{theorem}  
There is a randomized algorithm using $O(\log m) + (\log n)^{2^{r+3} + O(1)}$ expected time and $O(n + m \log n)$ processors, to produce an MIS of a rank-$r$ hypergraph.
\end{theorem}
}
\begin{proof}
  If $2^{r} > \frac{\log n}{\log \log n}$ or $n$ is smaller than any needed constants, then use the sequential algorithm; this will run in time $O(n)$, which is smaller than $O((\log n)^{2^r})$.  If $m < n$, then we run the following simple pre-processing step in $O(\log(mn))$ time: for each edge $e \in G$, mark one vertex arbitrarily from $e$; then add all unmarked vertices to the independent set. Hence we assume $m \geq n \gg 1$.

After these checks are completed, we use the following algorithm FIND-MIS, which takes as input a rank-$r$ hypergraph $G$ and returns an MIS of $G$ wvhp. To bound the expected runtime, we can simply run the algorithm multiple times until it succeeds.
  
\begin{algorithm}[H]
\centering
\begin{algorithmic}[1]
\State{Let $\Delta_0 = 2^{\lfloor \log_2 m^5 \rfloor}$}
\State Let $I = \emptyset$. 
\State \textbf{for }{$i = 0, 1, 2, \dots, \log_2 \Delta_0$} \textbf{do} update $I \leftarrow I \cup \text{REDUCE}(G^{\text{res}}_I, \frac{2^i}{\Delta_0})$
\State Output $I$
\end{algorithmic}
\caption{The algorithm FIND-MIS($G$) 
\label{algo:find-mis}}
\end{algorithm}

Let $G_i$ be the residual hypergraph before the $i^{\text{th}}$ iteration of the loop (so $G_0$ is the original input hypergraph). We claim that for $i = 0, \dots, \log_2 \Delta_0$, the hypergraph $G_i$ is $\Delta_i = \Delta_0 2^{-i}$ constrained. For the base case of the induction, observe that $D_j(X) \leq m$ trivially; so it suffices to show that for any integers $j, x \geq 1$ with $j + x \leq r$ we have $\Delta_0^j (\log m)^{-g(j+x)} \geq m$. To show this, note that for $j + x \leq r$ we have:
\begin{align*} 
  \Delta_0^j (\log m)^{-g(j+x)} &\geq \frac{m^5}{2}  (\log m)^{-2^{r+2} + 9} \geq \frac{m^5}{2} (\log m)^{-\frac{4 \log n}{\log \log n} + 9} \geq m^5 (\log m)^{-\frac{4 \log m}{\log \log m}} = m
\end{align*}

For the induction step, we apply Theorem~\ref{Fdecrease-thm}. The conditions (R1), (R2), (R4) are satisfied by our preprocessing steps and (R3) is satisfied by our choice of $\Delta_i = \frac{\Delta_0}{2^i}$.

There are $1 + \log_2 \Delta_0 = O(\log m)$ applications of REDUCE, each of which takes $(\log m)^{2^{r+2} + O(1)}$ time. So the overall runtime is at most $(\log m)^{2^{r+2} + O(1)}$. Since $m \leq n^r$, this in turn is at most $(\log n)^{2^{r+3}+O(1)}$.

Now consider $G' = G_{\log_2 \Delta_0}$. We have shown above that $G'$ is $1$-constrained. We claim that $G'$ has rank at most $1$. For, suppose that $e$ is some edge of rank $k \geq 2$, and let $x \in e$ be an arbitrary vertex. Set $X = \{x \}$ and $j = k - 1$. In this case, we have $\{ e \setminus X \} \in N_{j} (X )$ and so $D_{j}( X) > 0$. But since $G'$ is $1$-constrained it holds that $D_j(X) \leq (\log m)^{-g(j+|X|)} \leq (\log m)^{-g(2)} = (\log m)^{-7} < 1$.

Since $G' = G^{\text{res}}_I$ has rank 1, it is trivial to extend $I$ to an MIS in $O(\log n)$ time.
\end{proof}

\textbf{A distributed algorithm.} Kutten et. al. \cite{kutten} described a number of distributed algorithms for hypergraph MIS. These are also based on the REDUCE procedure, but there is a complication: in Kelsen's original algorithm, the probability $p$ used in REDUCE is a function of the maximum degree of the current residual hypergraph. This is a global statistic, and so it cannot be computed easily in a distributed (local) algorithm. As a result, the algorithm of \cite{kutten} is fairly complex and has some additional technical limitations.

By contrast, the probability $p$ used in Algorithm~\ref{algo:find-mis} depends solely on $n,r$ and the number of iterations since the algorithm has begun. These are globally known parameters and so each step of REDUCE and of Algorithm~\ref{algo:find-mis} can easily be simulated in $O(1)$ distributed rounds. Thus, Algorithm~\ref{algo:find-mis} can be trivially converted into a distributed algorithm, giving the following result:
\begin{theorem}
There is a $(\log n)^{2^{r+3} + O(1)}$-round randomized algorithm to compute an MIS of a rank-$r$ hypergraph wvhp in the CONGEST distributed computing model.
\end{theorem}

\section{Derandomizing the algorithm}
We now turn to derandomizing the REDUCE algorithm in the PRAM setting. Instead of drawing $C^{(t)}(1), \dots, C^{(t)}(n)$ randomly at each round $t$, we do so by a careful deterministic process. This requires two major modifications of the algorithm.

First, in order to achieve a greater degree of independence, we do not select the vector $C$ all at once. Each entry $C^{(t)}(v)$ is a Bernoulli random variable with mean $\tfrac{1}{\Delta}$, which can also be viewed as the conjunction of $s = \log_2 \Delta$ independent fair coins $B^{(t,1)}(v), \dots, B^{(t,s)}(v)$. Accordingly, we build $C$ through $s$ sub-rounds. In sub-round $i = 1, \dots, s$, we select the bit-vector $(B^{(t,i)}(v) \mid v \in V)$ from the support of a distribution $\Omega$ (to be specified), and at the end  we set
$$
C^{(t)}(v) = B^{(t,1)}(v) \wedge B^{(t,2)}(v) \wedge \dots \wedge B^{(t,s)}(v)
$$

For each round $t$ and sub-round $i$ we define $B^{ \leq (t,i)}$ to be the vector $$
B^{\leq(t,i)} = ( B^{(1,1)}, \dots, B^{(1,s)}, B^{(2,1)}, \dots, B^{(2,s)}, \dots, B^{(t,1)}, \dots, B^{(t,i)} ).
$$
Note that $B^{(\leq t,0)} = B^{\leq(t-1,s)}$.

The second modification is to select $B^{(t,i)}$ based on the method of conditional expectations. We define a series of potential functions $\Phi_{\sigma}^{(t,i)}$, which are functions mapping  $B^{\leq (t,i)}$ to non-negative real numbers; each of these represents (an approximation to) the probability of a certain bad-event, ``as if'' the bits $B^{\leq(t,i)}$ are fixed to some arbitrary value and the remaining bits $B^{(t,i+1)}, \dots, B^{(T,s)}$ are drawn independently. The index $\sigma$ is a label for some bad-event that we want to avoid, for example, $\sigma$ may be a label that some set $X$ fails to collapse at some given time $t$.  For any fixed index $\sigma$, we let $\Phi_{\sigma}$ denote the collection of functions $\Phi_{\sigma}^{(t,i)}$, and we refer to this family as a \emph{potential summand}.  

We also define an overall potential function $\Phi^{(t,i)} = \sum_{\sigma} \Phi_{\sigma}^{(t,i)}$. At each stage, we select a value for $B^{(t,i)}$ in order to minimize $\Phi^{(t,i)}$. See Algorithm~\ref{alg-det}.

\begin{algorithm}[H]
\centering
\caption{DET-REDUCE($G$) 
\label{alg-det}}
\begin{algorithmic}[1]
  \State Generate the probability space $\Omega$ over ground set $V$ satisfying condition (Q) (see below)
\For{$t = 1, \dots, T = (\log m)^{2^{r+2}}$}
\State Let $G^{(t)} = G^{\text{res}}_I$.
\For{$i = 1, \dots, s$}
\State{In parallel, search over all possible $b \in \text{support}(\Omega)$.}
\State{Choose the value $B^{(t,i)} = b$ to minimize $\Phi^{(t,i)}( B^{\leq (t,i)} )$}
\EndFor
\State{For each $v \in V$, set $C^{(t)}(v) = B^{(t,1)}(v) \wedge \cdots \wedge B^{(t,s)}(v)$}
\State Form the set $A^{(t)} = \{ v \in V \mid C^{(t)} (v) = 1 \wedge C^{(t)}(Y) = 0 \text{ for all $Y \in N^{(t)} ( \{ v \})$} \}$.
\State{Update $I \leftarrow I \cup A^{(t)}$.}
\EndFor
\State{Set $G^{(T+1)} = G^{\text{res}}_I$ and output $I$}
\end{algorithmic}
\end{algorithm}

Specifically,  the probability space $\Omega$ is chosen to satisfy the following condition:
  \begin{enumerate}
\item[(Q)] For any integer $k$ with $0 \leq k \leq L \log m$ (where $L$ is a large constant to be specified later), and any indices $1 \leq i_1 < i_2 < \dots < i_k \leq n$, and any $y \in \{0, 1 \}^k$,  we have
$$
(1 - \frac{1}{2s}) 2^{-k} \leq \Pr_{X \sim \Omega} (X_{i_1} = y_1 \wedge X_{i_2} = y_2 \wedge \dots \wedge X_{i_k} = y_k) \leq (1+\frac{1}{2s}) 2^{-k} 
  $$
  \end{enumerate}

  This condition is known as a \emph{$\delta$-approximate $w$-wise independence}, where $\delta = \frac{2^{-L \log m}}{2 s}$ and $w = L \log m$. Naor \cite{naor2} showed that such probability spaces exist with support size of order  $2^{O(w + \log(1/\delta) + \log \log n)}$, and be constructed efficiently by deterministic parallel algorithms. With these values of $w$ and $\delta$, and recalling that $s \leq O(\log m)$, we see that $\Omega$ has support size $\poly(m)$.
  
\subsection{Properties of the potential function}

Our proof strategy for DET-REDUCE is to show by induction on $t,i$ that the following condition (I) holds at every round $t = 0, \dots, T$ and sub-round $i = 0, \dots, s$:
\begin{enumerate}
\item[(I)] The  hypergraphs $G^{(t)}$ is $\Delta$-semiconstrained and $\Phi^{(t,i)} < 1$
\end{enumerate}
We also say that \emph{(I) holds up to $(t,i)$} if it holds at every round $t'$ and sub-round $i'$ with either $t' < t$ or $t' = t, i' \leq i$. 
 
In order to carry out this induction and to get an efficient algorithm, we require a number of properties of our family of potential summands. First, we require:
\begin{enumerate}
\item[(P1)] The total number of potential summands $\sigma$ is less than $m^{100}$.
\end{enumerate}

In addition, we require that every index $\sigma$ satisfies the following properties (P2) --- (P5):
\begin{enumerate}
\item[(P2)]  $\Phi^{(\ell,i)}_{\sigma}$ can be computed in parallel using $\poly(m)$ processors and $\polylog(m)$ time.
\item[(P3)] $\Phi_{\sigma}^{(0,s)} \leq m^{-100}$.
\item[(P4)] If (I) holds up to $(\ell, i-1)$ for any $\ell \geq 1$ and $i \geq 1$ and we fix a value for $B^{(\ell, i-1)}$, then the conditional expectation of $\Phi_{\sigma}^{(\ell, i)}$ satisfies
$$
\bE_{B^{(\ell,i)} \sim \Omega} \bigl[ \Phi_{\sigma}^{(\ell,i)}( B^{\leq (\ell,i)}) \mid B^{(\ell,i-1)} \bigr] \leq \Phi_{\sigma}^{(\ell,i-1)}(B^{\leq (\ell,i-1)}).
$$
\item[(P5)] If $G^{(\ell)}$ is $\Delta$-semiconstrained and (I) holds up to round $(\ell-1, s)$ for any $\ell \geq 1$, then with probability one
$$
\Phi_{\sigma}^{(\ell,0)} (B^{\leq (\ell,0)}) \leq \Phi_{\sigma}^{(\ell-1,s)}(B^{\leq (\ell-1,s)}).
$$
\end{enumerate}

Conditions (P1) and (P2) will ensure that DET-REDUCE can be implemented in $\poly(m)$ processors and $\polylog(m)$ time.

Each summand is meant to represent the conditional probability of a certain bad-event. With this interpretation, properties (P3) and (P1) mean that the expected number of occuring bad events is below $1$. Property (P4) is essentially the law of iterated expectations.  (Property (P5) is used for technical reasons, to allowing rescaling between successive times.) Note that, since all our summands are non-negative, condition (I) ensures that $\Phi_{\sigma}^{(\ell, i)} < 1$ for all summands $\sigma$.

There are two types of summands, which we denote (S1) and (S2), to enforce edge-migration and edge-collapse properties similar to the randomized algorithm. We define and analyze (S1) in  Section~\ref{Fedge-migrate-sec-det}. Summand (S2), which is a fairly routine derandomization of the the corresponding randomized algorithm, is mostly covered in Appendix~\ref{collapse-proofs}, with a brief summary in Section~\ref{Fcollapse-det-sec}. We also describe in Section~\ref{Fcollapse-det-sec} how the overall induction proof works to show property (I).

\section{Polynomial concentration bounds via non-central moments}
\label{conc-polynomial}
Just as in the randomized algorithm, we must bound the migration $M^{(t)}_{j,k}(X)$. We do so by derandomizing concentration inequalities for polynomials. The bounds of Section~\ref{Fedge-migrate-sec}, which are derived from \cite{schudy-sviridenko},  are not suitable for this purpose because they are based on  central moments of the underlying  random variables. These may be highly distorted in an approximately-independent probability space.  Our first task is thus to derive new concentration bounds based on non-central moments, which are useful for probability spaces with approximate independence.

We consider a polynomial of the form
$$
S(x_1, \dots, x_n) = \sum_{Z \in U_q} a_Z \prod_{j \in Z} x_j
$$
where $a_Z$ are non-negative real numbers and where we define $U_q$ to be the set of $q$-element subsets of $[n]$, i.e. the set $\binom{[n]}{q}$. For a parameter $\lambda \in [0,1]$, we wish to estimate $\bE[ S(X_1, \dots, X_n)^w ]$, where the variables $X_1, \dots, X_n$ are (approximately) distributed as iid Bernoulli-$\lambda$. As in Theorem~\ref{schudy-thm}, we do so by bounding the expected partial derivatives of $S$.

For each integer $\ell = 0, \dots, q$ we define
\begin{equation}
\label{mu-defn}
\mu_{\ell} =\lambda^{q-\ell} \max_{Y \in U_{\ell}} \sum_{\substack{Z \in U_q \\ Z \supseteq Y}} a_Z 
\end{equation}

Here $\mu_{\ell}$ can be interpreted as the maximum expected value of the partial $\ell$-order derivative of polynomial $S$. Note that this setting is more restricted than that in Theorem~\ref{schudy-thm}: all the sets $Z$ have the same size $q$, and all the variables $X_i$ have the same probability $p_i$, which is equal here to $\lambda$. Thus, this definition (\ref{mu-defn}) is simplified compared to the one used in Theorem~\ref{schudy-thm}.

We get the following concentration inequality for approximate-independence probability spaces:
\begin{theorem}
\label{Fapp-prop2}
Let $a_Z \geq 0$, $\mu_{\ell}$, and polynomial $S$ be defined as above. Suppose $X_1, \dots, X_n$ are Bernoulli random variables which satisfy an approximate $w q$-wise independence condition; namely that for any $k \leq w q$ and any indices $i_1, \dots, i_{k}$ we have
$$
\Pr(X_{i_1} = X_{i_2} = \dots = X_{i_k} = 1) \leq (1+\epsilon) \lambda^k
$$

Then
$$
\bE[S(X_1, \dots, X_n)^w] \leq (1+\epsilon) \Bigl( \sum_{\ell=0}^q \binom{w q}{\ell} \mu_\ell \Bigr)^w
$$
\end{theorem}
\begin{proof}
First expand the sum as:
\begin{align*}
\bE[S(X_1, \dots, X_n)^w] & = \sum_{Z_1, \dots, Z_w \in U_q} a_{Z_1} \dots a_{Z_w} \bE[ \prod_{i \in Z_1 \cup \dots \cup Z_w} X_i ]
\end{align*}

For any $Z_1, \dots, Z_w \in U_q$ the approximate independence condition gives 
$$
\bE[ \prod_{i \in Z_1 \cup \dots \cup Z_w} X_i ] = \Pr( \bigwedge_{i \in Z_1 \cup \dots \cup Z_w} X_i = 1 ) \leq (1+\epsilon) \lambda^{|Z_1 \cup \dots \cup Z_w|}
$$
Hence
$$
\bE[S(X_1, \dots, X_n)^w] \leq (1+\epsilon)  \sum_{Z_1, \dots, Z_w \in U_q} a_{Z_1} \dots a_{Z_w} \lambda^{|Z_1 \cup \dots \cup Z_w|}
$$

To complete the proof, we claim that for $w \geq 0$ we have the inequality
\begin{equation}
\label{Fyy1e}
\sum_{Z_1, \dots, Z_w \in U_q} a_{Z_1} \dots a_{Z_w} \lambda^{|Z_{1} \cup Z_{2} \cup \dots \cup Z_{w}|} \leq \Bigl( \sum_{\ell=0}^q \binom{w q}{\ell} \mu_\ell \Bigr)^w
\end{equation}

We show (\ref{Fyy1e}) by induction on $w$. When $w = 0$, then (\ref{Fyy1e}) is vacuously true as both Left-hand Side (LHS) and Right-hand Side (RHS) are equal to $1$. For $w > 0$, consider the LHS sum in (\ref{Fyy1e}). For a fixed $Z_1, \dots, Z_{w-1}$ we let $A = Z_{1} \cup \dots \cup Z_{w-1}$ and write the sum over $Z_w$ as

{\allowdisplaybreaks
\begin{align*}
\sum_{Z_w \in U_q} a_{Z_w} \lambda^{|Z_{1} \cup Z_{2} \cup \dots \cup Z_{w}|} &= \lambda^{|A|} \sum_{Z_w \in U_q} a_{Z_w} \lambda^{q - |Z_w \cap A|} = \lambda^{|A|} \sum_{\ell=0}^q \sum_{\substack{Z \in U_q \\ |Z \cap A| = \ell}} a_Z \lambda^{q - \ell} \\
 &\leq \lambda^{|A|} \sum_{\ell=0}^q  \sum_{\substack{Y \subseteq A \\ |Y| = \ell}} \sum_{\substack{Z \in U_q \\ Z \cap A = Y}} a_Z \lambda^{q- \ell} \leq \lambda^{|A|} \sum_{\ell=0}^q \sum_{\substack{Y \subseteq A \\ |Y| = \ell}} \sum_{\substack{Z \in U_q \\ Z \supseteq Y}} a_Z \lambda^{q- \ell}\\
 &\leq \lambda^{|A|} \sum_{\ell=0}^q \binom{|A|}{\ell} \mu_\ell \leq \lambda^{|A| } \sum_{\ell=0}^q \binom{w q}{\ell} \mu_\ell.
 \end{align*}
}
Now use the induction hypothesis to compute:
\begin{align*}
&\sum_{Z_1, \dots, Z_w} a_{Z_1} \dots a_{Z_w} \lambda^{|Z_{1} \cup \dots \cup Z_{w}|} \leq \sum_{Z_1, \dots, Z_{w-1}}  a_{Z_1} \dots a_{Z_{w-1}} \lambda^{| Z_{1}  \cup \dots \cup Z_{i_{w-1}}|} \times \sum_{\ell=0}^q \tbinom{w q}{\ell} \mu_\ell \\
&\qquad \qquad \leq \bigl( \sum_{\ell=0}^q \binom{(w-1) q}{\ell} \mu_\ell  \bigr)^{w-1} \times \Bigl( \sum_{\ell=0}^q \binom{w q}{\ell} \mu_\ell  \Bigr) \leq \Bigl( \sum_{\ell=0}^q \binom{w q}{\ell} \mu_\ell  \Bigr)^{w}  \qedhere
\end{align*}
\end{proof}

\section{Edge migration for the deterministic algorithm}
\label{Fedge-migrate-sec-det}
We next convert the probabilistic bounds of Section~\ref{conc-polynomial} into an appropriate pessimistic estimator for the concentration of the edge migration $M^{(t)}_{j,k}$. This is the most technically challenging part of our algorithm. Before we describe the formal construction, let us first provide a high-level overview.

\subsection{Overview}
\label{migrate-overview-sec}
The approach is similar to, and inspired by, a method of Alon \& Srinivasan \cite{alon-srin} to derandomize concentration bounds for sums of independent random variables. This uses two key ingredients: a probability space with approximate independence for $\Theta(\log n)$ variables, and a potential function based on the conditional expectation of high-order moments of the sum. Following their approach, it would be natural to define a polynomial $S$ upper-bounding $M^{(t)}_{j,k}$ and use the conditional expectation of $S(C^{(t)})^w$ as the potential function, for a parameter $w = \Theta(\log n)$.

There is one minor technical detail to keep in mind here, which also appears in the method of Alon \& Srinivasan: since the probability space $\Omega$ only has an approximate independence condition, we cannot guarantee that this potential function decreases at each stage. It may slowly increase, by a factor of $(1+\epsilon)$ at each stage. This can be easily handled by using instead the potential function $(1+\epsilon)^{s-i} \bE[ S(C^{(t)})^w ]$ where the expectation is taken assuming $B^{(t,1)}, \dots, B^{(t,i)}$ are fixed and $B^{(t,i+1)}, \dots, B^{(t,s)}$ are independent Bernoulli-$\tfrac{1}{2}$.

However, there are two severe technical roadblocks to adapting the method of Alon \& Srinivasan to higher-degree polynomials, which are fundamentally different than anything that occurs for sums of independent random variables.

The first roadblock is we must be able to exactly compute the potential function --- an upper bound on it is not enough. Thus, for the method of conditional expectations,  we would need to compute the expectation of $S(C^{(t)})^w$ for some fixed value of $B^{\leq(t,i)}$ . When we fix these bits, then some values of $C^{(t)}$ are forced to be zero while others remain iid Bernoulli with mean $\lambda = 2^{-(s-i)}$. The expectation $S(C^{(t)})^w$ can then be written as a sum over $w$-tuples of the monomials of $S$, i.e. 
\begin{equation}
\label{t1-eqn}
\bE[S(C^{(t)})^w] = \sum_{Z_1 \dots, Z_w} a_{Z_1} \dots a_{Z_w} \lambda^{|Z_1 \cup \dots \cup Z_w|}
\end{equation}

There is no obvious shortcut to computing (\ref{t1-eqn}) other than enumerating over $Z_1, \dots, Z_w$; this would require roughly $m^w$ processors, which is super-polynomial since we are taking $w = \Theta( \frac{\log m}{\log \log m})$. (This is the main difference between our algorithm and the derandomized algorithm of Kelsen \cite{kelsen}, which used $w = \Theta(1)$). To get an efficiently-computable potential function, we use Theorem~\ref{Fapp-prop2} to get the bound
$$
\bE[S(C^{(t)})^w] \leq \Bigl( \sum_{\ell=0}^q \binom{w q}{\ell} \mu_\ell \Bigr)^w
$$
where the terms $\mu_{\ell}$ are  upper bounds on the expeted partial derivatives of $S$. 

Note that $\mu_0$ is the expected zero-order derivative, i.e. $\mu_0 = \bE[S(C^{(t)})]$. In the cases of interest to us, the relevant monomials $Z_i$ are ``mostly'' disjoint. Because of this fact, $\mu_1, \dots, \mu_q$ will be negligible, and hence we expect
$$
\bE[S(C^{(t)})^w] \approx \Bigl( \binom{w q}{0} \mu_0 + \text{low order terms} \Bigr)^w = (\bE[S(C^{(t)})] + \delta)^w
$$

Since these low-order terms (the contributions from the terms $\mu_1, \dots, \mu_q$) are relatively small, we do not need to calculate them exactly; we use a crude upper-bound $\hat \delta$ which, crucially, \emph{does not depend on the values of $C$}. Specifically, we will take our potential function to be $(\bE[S(C^{(t)})] + \hat \delta)^w$, which will be an adequate first-order approximation to $\bE[S(C^t)^w]$.

The second, and related, technical roadblock is much more difficult. This approximation depends on the relative sizes of the (expected) partial derivatives of the polynomial $S$.  The DET-REDUCE process gradually fixes the variables $B^{(t,i)}$; this means that the partial derivatives of the polynomial $S(C^{(t)})$ are also changing. We must ensure that the conditional expectation of the partial derivatives of $S$ changes at approximately the correct rate to keep pace with the changing expectation of $S$ itself. Thus, in parallel to showing concentration bounds for the polynomial $S$, we are forced to show concentration bounds for all of its partial derivatives. These are all mutually interdependent (and at each stage there is some deviation from the mean, leading to some error terms) leading to a complicated recursive formula.

We emphasize that these two hurdles are fundamentally new phenomena which are not present for linear polynomials; deriving appropriate potential functions to handle them requires numerous technical modifications compared to the relatively clean approach of Alon \& Srinivasan.

\subsection{The potential function for edge-migration}
Suppose that induction condition (I) holds up to some fixed round $t$. We will introduce a series of potential summands to collectively guarantee that every set $X$ has $M^{(t)}_{j,k}(X) \approx \Delta^j (\log m)^{-g(k+x)}$, thus ensuring that (I) holds again at round $t+1$. Specifically, our goal will be to show that, for every non-empty set $X$ we have $M_{j,k}^{(t)}(X) \leq \Gamma_{j,k,|X|}$, where we define the parameter
$$
\Gamma_{j,k,x} = \Delta^{j} (\log m)^{-g(k+x) + 4 (k-j)}
$$
for $1 \leq j < k \leq r$ and $1 \leq x \leq r$. To do so, let us define $C^{(t,i)}(v) = \bigwedge_{j=1}^i B^{(t,j)}(v)
$ for a vertex $v$; note that $C^{(t)}(v) = C^{(t,s)}(v)$ and $C^{(t,0)}(v) = 1$. Using this notation, we define the ``partial migration upper-bound'' functions
$$
S^{(t,i)}_{j,k}(X) = \sum_{\substack{Z \subseteq V \setminus X \\ |Z| = k-j}} C^{(t,i)}(Z) D_j^{(t)}(X \cup Z)
$$

We will show by induction on $i$ that
$$
S_{j,k}^{(t,i)}(X) \leq \Gamma^{(i)}_{j,k,|X|} := 2^{(s-i)(k-j)} \Gamma_{j,k,|X|} = 2^{(s-i)(k-j)} \Delta^{j} (\log m)^{-g(k+|X|) + 4 (k-j)}
$$

We emphasize one important feature of this proof: unlike in our analysis of the randomized algorithm, we cannot simply fix a set $X$ and show that $S^{(t,i)}_{j,k}(X)$ is bounded on its own. The quantities $S^{(t,i)}_{j,k}(X)$ are all mutually interdependent, and it is very important to control them simultaneously. To that end, we introduce a collection of potential summands, which we call (S1),  to enforce the bounds $S_{j,k}^{(t,i)}(X) \leq \Gamma^{(i)}_{j,k,|X|}$.

For each non-empty set $X \subseteq V$, each pair of integers $0 < j < k \leq r$, and each round $t = 1, \dots, T$ we have a potential summand $\Phi_{\text{S1}, t, k, j, X}$ as follows:
$$
\Phi^{(\ell, i)}_{\text{S1}, t,k,j,X} = \begin{cases}
  4^{s-i} \Bigl( \frac{S^{(t,i)}_{j,k}}{\Gamma^{(i)}_{j,k,|X|}} +  \frac{1}{\log m}  \Bigr)^w  & \text{if $\ell = t$} \\
m^{-100} & \text{if $\ell < t$} \\
0 & \text{if $\ell > t$}
\end{cases} \qquad \qquad \text{where  $w = \Big \lceil \frac{1000 \log m}{\log \log m} \Big \rceil$ }
$$

Let us briefly describe how this fits into the high-level overview discussed in Section~\ref{migrate-overview-sec}. Up to rescaling, the quantity $S^{(t,i)}_{j,k}$ is essentially the conditional expectation of  the statistic $S^{(t,s)}_{j,k}$. The quantity $\frac{1}{\log m}$ here is the second-order correction term $\hat \delta$, which accounts for the contributions of all intersecting $w$-tuples of neighbors. The quantity $4^{s-i}$ is a fudge factor, correcting for some additional error terms, including the fact that $\Omega$ only obeys an approximate independence condition.

It is clear that the summands (S1) satisfy (P2), (P3). We must only properties (P4) and (P5), both of which hold vacuously for $\ell \neq t$. Thus, for the remainder of this section, we only consider $\ell = t$. Also, we assume throughout that $X$ is a non-empty set.

\begin{proposition}
The summands (S1) satisfy property (P5).
\end{proposition}
\begin{proof}
  Let $|X| = x$.   We must show that $\Phi^{(t,0)}_{\text{S1},t,k,j,X} \leq \Phi_{\text{S1},t,k,j,X}^{(t-1,s)} = m^{-100}$. We first estimate $S^{(t,0)}_{j,k}(X)$. For this, note that $C^{(t,0)}(Z) = 1$ for all $Z$ and $G^{(t)}$ is $\Delta$-semiconstrained. So
\begin{align*}
  S^{(t,0)}_{j,k}(X) &= \sum_{ \substack{Z \subseteq V \setminus X  \\ |Z| = k-j}} D_j^{(t)}(X \cup Z) = \tbinom{k}{j} D_k^{(t)}(X) \leq 2 \Delta^k 2^k (\log m)^{-g(k+x)}  \\
&\leq \Delta^k (\log m)^{-g(k+x)+1} \qquad \text{as $k \leq r \leq \log_2  \tfrac{\log m}{\log \log m} $}
\end{align*}

As $2^s = \Delta$, we have
$$
\Gamma^{(0)}_{j,k,x} = 2^{s (k-j)} \Delta^j (\log m)^{-g(k+x) + 4(k-j)} = \Delta^{k} (\log m)^{-g(k+x) + 4(k-j)}
$$

We thus compute  $\Phi^{(t,0)}_{\text{S1},t,k,j,X}$ as:
\begin{align*}
  \Phi^{(t,0)}_{\text{S1},t,k,j,X} &= 4^{s} \Bigl( \frac{S^{(t,0)}_{j,k}}{\Gamma^{(0)}_{j,k,x}} +  \frac{1}{\log m} \Bigr)^w \leq m^{10} \Bigl( (\log m)^{1 - 4 (k-j)} + \frac{1}{\log m} \Bigr)^w 
\end{align*}

As $j < k$, this is at most $m^{10} \bigl(  \frac{1}{\log^3 m} + \frac{1}{\log m}\bigr)^w$. As $w \geq \frac{1000 \log m}{\log \log m}$, we have $ \bigl( \frac{1}{\log^3 m} + \frac{1}{\log m} \bigr)^w = m^{-1000+o(1)}$; this shows $\Phi^{(t,0)}_{\text{(S1)},k,j,X} \leq m^{10} \times m^{-1000+o(1)}$; in particular, it is smaller than $m^{-100}$ for $m$ sufficiently large.
\end{proof}

We  next show property (P4) in two stages. First, we show that if the induction hypothesis (I) holds up to $(t,i)$, then this implies upper bounds on every term $S^{(t,i)}_{j,k}(X)$. We next use these upper bounds to compute $\bE[ \Phi^{(t,i)}_{\text{S1},t,k,j,X} \mid B^{(\ell, i-1)}]$, carrying the induction forward to sub-round $i+1$.

\begin{proposition}
  \label{Fubound}
If (I) holds up to $(t,i)$, then $S^{(t,i)}_{j,k}(X) \leq \Gamma^{(i)}_{j,k,|X|}$ for all sets $X \subseteq V$ and $j < k$.
  \end{proposition}
\begin{proof}
$\Phi^{(t,i)}_{\text{S1},t,k,j,X}$ is a potential summand so the induction condition (I) ensures that $\Phi^{(t,i)}_{\text{S1},t,k,j,X}  < 1$, i.e. $4^{s-i} \Bigl(  \frac{S^{(t,i)}_{j,k}}{\Gamma^{(i)}_{j,k,|X|}} + \frac{1}{\log m} \Bigr)^w < 1$. This in turn implies $\frac{S^{(t,i)}_{j,k}}{\Gamma^{(i)}_{j,k,|X|}} < 1$.
\end{proof}

\begin{proposition}
\label{hibound}
Let $X \subseteq V$  and $i \in \{1, \dots, s \}$.  When the entries $B^{\leq (t,i-1)}$ are fixed to satisfy (I) and $B^{(t,i)}$ is drawn from $\Omega$, we have 
$$
\bE[ (S^{(t,i)}_{j,k}(X))^{w} ] \leq 2 \Bigl( S^{(t,i-1)}_{j,k}(X) 2^{-(k-j)} + \Gamma^{(i)} _{j,k,|X|} (\log m)^{-2.98} \Bigr)^w
$$
\end{proposition}
\begin{proof}
  Let $|X| = x$.
  For every $Z \subseteq V \setminus X$ of size $|Z| = k-j$, let us define
  $$
  a_Z = C^{(t,i-1)}(Z) \times D_{j}^{(t)}(X \cup Z)
  $$

  Since $C^{(t,i)}(Z) = C^{(t,i-1)}(Z) \prod_{v \in Z} B^{(t,i)}(v)$, we thus compute $S^{(t,i)}_{j,k}(X)$ as
  \begin{align*}
    S^{(t,i)}_{j,k}(X) &= \sum_{\substack{Z \subseteq V \setminus X \\ |Z| = k-j}} C^{(t,i)}(Z) D_j(X \cup Z)  = \sum_{\substack{Z \subseteq V \setminus X \\ |Z| = k-j}} a_Z \prod_{v \in Z} B^{(t,i)}(v)
  \end{align*}

So we have:
  $$
  \bE[ ( S^{(t,i)}_{j,k}(X))^w ] = \bE \Bigl[ \Bigl( \sum_{\substack{Z \subseteq V \setminus X \\ |Z| = k-j}} a_Z \prod_{v \in Z} B^{(t,i)}(v) \Bigr)^w \Bigr]
  $$
  This is precisely the type of polynomial analyzed in Section~\ref{conc-polynomial}, with parameter $q = k-j$ and $\lambda = 1/2$. We will apply Theorem~\ref{Fapp-prop2}, wherein  the underlying random variables are $B^{(t,i)}$.  Since $k \leq r \leq O(\log m)$, we have $w q \leq O(\log m)$. The variables $B^{(t,i)}$ satisfy independence property (Q), which gives near-independence on tuples up to size $L \log m$. Therefore, when $L$ is a sufficiently large constant, the hypotheses of Theorem~\ref{Fapp-prop2} are satisfied for $\epsilon = \frac{1}{2 s}$ and $\lambda = \frac{1}{2}$, and so:
$$
\bE[ ( S^{(t,i)}_{j,k}(X))^w ] \leq \bigl(1 + \frac{1}{2 s} \bigr)  \Bigl(\sum_{\ell=0}^{q} \binom{w q}{\ell} \mu_{\ell} \Bigr)^{w} 
$$

As $1 + \frac{1}{2 s} \leq 2$ and $w q \leq O(\log m) \leq (\log m)^{1.01}$, we bound this somewhat crudely as:
\begin{equation}
  \label{Fttr2}
\bE[ ( S^{(t,i)}_{j,k}(X))^w ] \leq 2  \Bigl(\sum_{\ell=0}^{q} (\log m)^{1.01 \ell} \mu_{\ell} \Bigr)^{w} 
\end{equation}

We now turn to bounding the terms $\mu_{\ell}$. At $\ell = 0$ we have
\begin{equation}
  \label{mu0-e}
\mu_0 = \sum_{\substack{Z \subseteq V \setminus X \\ |Z| = q}} a_Z \lambda^q = \sum_{\substack{Z \subseteq V \setminus X \\ |Z| = q}} C^{(t,i-1)}(Z) \times D_{j}^{(t)}(X \cup Z) = 2^{-(k-j)} S^{(t,i-1)}_{j,k}(X)
  \end{equation}

To bound $\mu_{\ell}$ for $1 \leq \ell \leq k - j - 1$, consider some $W \subseteq V \setminus X$ with $|W| = \ell$. Then:
{\allowdisplaybreaks
\begin{align*}
  \sum_{\substack{Z \subseteq V \setminus X \\ |Z| = q \\ Z \supseteq W}} a_Z \lambda^{q - \ell} &= 2^{-(k-j-\ell)} \sum_{\substack{Z \subseteq V \setminus X \\ |Z| = q \\ Z \supseteq W}} C^{(t,i-1)}(Z) \times D_{j}^{(t)}(X \cup Z) \\
  &\leq 2^{-(k-j-\ell)} \sum_{\substack{U \subseteq V \setminus (X \cup W)  \\ |U| = q-\ell}} C^{(t,i-1)}(U) \times D_{j}^{(t)}(X \cup W \cup U) \qquad \text{letting $U = Z \setminus W$} \\
  &= 2^{-(k-j-\ell)} S_{j,k-\ell}^{(t,i-1)}(X \cup W) \leq 2^{-(k-j-\ell)} \Gamma^{(i-1)}_{j,k-\ell,x + \ell} \qquad \text{(by Proposition~\ref{Fubound})}
\end{align*}
}

We can simplify this as $\frac{ 2^{-(k-j-\ell)} \Gamma^{(i-1)}_{j,k-\ell,x + \ell}  }{\Gamma^{(i)}_{j,k,x}} = 2^{-\ell (s-i)} (\log m)^{-4 \ell}$, and thus we have shown:
\begin{equation}
  \label{mu1-e}
\mu_{\ell} \leq 2^{-\ell(s-i)} (\log m)^{-4 \ell} \Gamma^{(i)}_{j,k,x} \qquad \text{for $\ell = 1, \dots, k-j-1$}
\end{equation}

Finally, for $\ell = k-j$, we use the fact the $G^{(t)}$ is $\Delta$-semiconstrained to compute:
\begin{equation}
\label{mu2-e}
\mu_{k-j} =  \max_{\substack{Z \subseteq V \setminus X \\ |Z| = q}} C^{(t,i-1)}(Z) D_j^{(t)} (X \cup Z) \leq 2 \Delta (\log m)^{-g(k+x)} = 2 \Gamma_{j,k,x} (\log m)^{-4(k-j)} 
\end{equation}

Substituting the three bounds (\ref{mu0-e}), (\ref{mu1-e}), and (\ref{mu2-e}) into (\ref{Fttr2}) yields:
\begin{align*}
 \bE[ ( S^{(t,i)}_{j,k}(X))^w ] &\leq 2 \Bigl( S^{(t,i-1)}_{j,k}(X) 2^{-(k-j)} + 2  (\log m)^{1.01(k-j)} \Gamma_{j,k,x}  (\log m)^{-4(k-j)} \\
 &\qquad \qquad + \sum_{\ell=1}^{k-j-1} (\log m)^{1.01 \ell} \Gamma^{(i)}_{j,k,x} (\log m)^{-4 \ell} 2^{-\ell(s-i)} \Bigr)^w
 \end{align*}
 
 These summands decrease exponentially, and so the sum can be bounded by twice the summand at $\ell = 1$. Allowing another factor of $(\log m)^{0.01}$ to cover any constants, we get
\begin{align*}
 \bE[ ( S^{(t,i)}_{j,k}(X))^w ] \leq 2 \Bigl( S^{(t,i-1)}_{j,k}(X) 2^{-(k-j)} +  \Gamma^{(i)}_{j,k,x} (\log m)^{-2.98} \Bigr)^w
 \end{align*}
 which concludes the proof.
\end{proof}

\begin{proposition}
  \label{hibound2}
  Let $X \subseteq V$ and $i \in \{1, \dots, s \}$.  When the entries $B^{\leq (t,i-1)}$ are fixed to satisfy (I) and $B^{(t,i)}$ is drawn from $\Omega$,  we have 
  $$
  \bE[\Phi_{\text{S1},t,k,j,X}^{(t,i)} ] \leq 2 \times 4^{s-i} \Bigl( \frac{S_{j,k}^{(t,i-1)} (X)  }{2^{k-j} \Gamma^{(i)}_{j,k,|X|}} + (\log m)^{-2.99} + \frac{1}{\log m} \Bigr)^w
  $$
  \end{proposition}
  \begin{proof}
  Let $|X| =x$.   We have
  $$
  \bE[\Phi_{\text{S1},t,k,j,X}^{(t,i)} ] = 4^{s-i} \bE \Bigr[ \Bigl( \frac{S_{j,k}^{(t,i)} (X)  }{\Gamma^{(i)}_{j,k,x}} + \frac{1}{\log m} \Bigr) ^w \Bigr]
  $$

Let us consider the increasing concave-down function $f: \mathbb R_{+} \rightarrow \mathbb R_{+}$ defined by
$$
f(y) = \bigl( \frac{y^{1/w}}{\Gamma^{(i)}_{j,k,x}} + \frac{1}{\log m} \bigr)^w
$$
and let us also define $Q = (S^{(t,i)}_{j,k}(X))^w$. Proposition~\ref{hibound} shows that 
$$
\bE[Q] \leq 2 \alpha^w \qquad \qquad \text{where $\alpha = S^{(t,i-1)}_{j,k}(X) 2^{-(k-j)} + \Gamma_{j,k,x}^{(i)} (\log m)^{-2.98}$}
$$

We see that  $\bE[\Phi_{\text{S1},t,k,j,X}^{(t,i)} ] = 4^{s-i} \bE[ f(Q) ]$. By Jensen's inequality, we have
$$
\bE[f(Q)] \leq f(\bE[Q]) \leq f(2 \alpha^w) = \bigl( \frac{ 2^{1/w} \alpha}{\Gamma_{j,k,x}^{(i)}} + \frac{1}{\log m} \bigr)^w \leq 2 \bigl( \frac{ \alpha }{\Gamma_{j,k,x}^{(i)}} + \frac{1}{\log m} \bigr)^w
$$

Substituting the value of $\alpha$, we have
\begin{align*}
\frac{\alpha}{\Gamma_{j,k,x}^{(i)}} &= \frac{S^{(t,i-1)}_{j,k}(X) 2^{-(k-j)} + \Gamma_{j,k,x}^{(i)} (\log m)^{-2.98} }{\Gamma_{j,k,x}^{(i)}} = \frac{S^{(t,i-1)}_{j,k}(X)}{\Gamma_{j,k,x}^{(i-1)}} + (\log m)^{-2.98} 
\end{align*}

Thus, overall we have shown
\[
  \bE[\Phi_{\text{S1},t,k,j,X}^{(t,i)} ] = 4^{s-i} \bE[Q] \leq 2 \times 4^{s-i}  \bigl( \frac{S^{(t,i-1)}_{j,k}(X)}{\Gamma_{j,k,x}^{(i-1)}} + (\log m)^{-2.98}  + \frac{1}{\log m} \bigr). \qedhere
  \]
  \end{proof}

\begin{proposition}
The summand $\Phi_{\text{S1},t,k,j,X}$  satisfies Property (P4).
\end{proposition}
\begin{proof}
  Let $|X| = x$. Let us fix a value for $B^{(t, i-1)}$ satisfying (I). By Proposition~\ref{hibound2}, we calculate the ratio $\bE[\Phi_{\text{S1},t,k,j,X}^{(t,i)} ] /\Phi_{\text{S1},t,k,j,X}^{(t,i-1)}$ as:
$$
  \frac{\bE[ \Phi_{\text{S1},t,k,j,X}^{(t,i)} ]}{\Phi_{\text{S1},t,k,j,X}^{(t,i-1)}} \leq  \frac{ 2 \times 4^{s-i} \Bigl(  \frac{S^{(t,i-1)}_{j,k}(X)}{\Gamma^{(i-1)}_{j,k,x}}  + (\log m)^{-2.98} + \frac{1}{\log m} \Bigr)^w }{4^{s-i+1} \Bigl( \frac{S^{(t,i-1)}_{j,k}  }{\Gamma^{(i-1)}_{j,k,x}} + \frac{1}{\log m} \Bigr)^w } = \tfrac{1}{2} \Biggl( 1 + \frac{ (\log m)^{-2.98} }{\frac{S^{(t,i-1)}_{j,k}  }{\Gamma^{(i-1)}_{j,k,x}} + \frac{1}{\log m} } \Biggr)^w
  $$
  
  We bound this latter term as:
  $$
  \frac{ (\log m)^{-2.98} }{\frac{S^{(t,i-1)}_{j,k}  }{\Gamma^{(i-1)}_{j,k,x}} + \frac{1}{\log m} } \leq 
  \frac{ (\log m)^{-2.98} }{\frac{1}{\log m} }  = (\log m)^{-1.98}
  $$
and so we have shown that
  $$
  \frac{\bE[ \Phi_{\text{S1},t,k,j,X}^{(t,i)}  ]}{\Phi_{\text{S1},t,k,j,X}^{(t,i-1)}} \leq  \tfrac{1}{2} \bigl( 1 + (\log m)^{-1.98} \bigr)^w \leq \tfrac{1}{2} e^{ w (\log m)^{-1.99} }
  $$

Since $w \leq O(\log m)$, this shows that $\frac{\bE[ \Phi_{\text{S1},t,k,j,X}^{(t,i)} ]}{\Phi_{\text{S1},t,k,j,X}^{(t,i-1)}} \leq \tfrac{1}{2} (1+o(1)) \leq 1$ for $m$ sufficiently large.
\end{proof}

\begin{theorem}
  \label{Fmain-result-det}
  If (I) holds up to $(t,s)$, then every $X \subseteq V$ has 
  $$
  D_j^{(t+1)}(X) \leq D_j^{(t)}(X) + \Delta^j (\log m)^{-g(j+1+|X|) + 4.01}.
  $$
\end{theorem}
\begin{proof}
  Let $|X| = x$.  By Proposition~\ref{Fubound}, we have $M_{j,k}^{(t)}(X) \leq S_{j,k}^{(t,s)}(X) \leq \Gamma^{(s)}_{j,k,x} = \Gamma_{j,k,x}$.  Summing over $k = j+1$ to $q$, we get a total migration of
$$
\sum_{k=j+1}^q M^{(t)}_{j,k}(X) \leq \sum_{k=j+1}^q \Gamma_{j,k,x}
$$

In this sum, the ratio between the $k+1$ term and $k$ term is given by
$$
\frac{\Gamma_{j,k+1,x}}{\Gamma_{j,k,x}} = (\log m)^{-g(k+1+x) + g(k+x) + 4} = (\log m)^{-2^{k+x+2} + 4}
$$
For $x \geq 1$ and $k \geq 2$, this is at most $(\log m)^{-28} \ll 1$. So the summands decrease exponentially and we can bound it by twice the summand at $k = j+1$, namely
\[
\sum_{k>j} \Gamma_{j,k,x} \leq 2 \Gamma_{j,j+1,x} = 2 \Delta^j (\log m)^{-g(j+1+x) + 4} \leq  \Delta^j (\log m)^{-g(j+1+x) + 4.01} \qedhere
\]
\end{proof}

\section{Finishing the induction}
\label{Fcollapse-det-sec}
We now discuss how to enforce the edge-collapse, similar to the randomized algorithm, and use this to show the induction condition (I).

To begin, we  introduce a collection of potential summands enforcing edge collapse. Specifically, for every set $X \subseteq V$ and every triple of integers $j, \tau, t$ with $1 \leq j \leq r$ and $1 \leq \tau < t \leq T$, we have a potential summand $\Phi_{\text{S2}, X, j, \tau, t}$ to ensure that $D^{(k)}_j(X)$ is small for some prior round $k$.  
These calculations are routine derandomizations of the randomized algorithm, and are also quite similar to calculations of Kelsen \cite{kelsen}, so we defer the formal definition and analysis of $\Phi_{\text{S2}, X, j, \tau, t}$  to  Appendix~\ref{collapse-proofs}.  We summarize it as follows:
\begin{proposition}
  \label{Ftg-prop}
The potential summand $\Phi_{\text{S2}, X, j, \tau, t}$ satisfies properties (P2) -- (P5). Furthermore, if $\Phi_{\text{S2}, X,j, \tau, t}^{(t-1, s)} < 1$ then $D_j^{(k)}(X) \leq \frac{\Delta^j (\log m)^{1.01}}{\tau}$ for some $k \in \{t - \tau+1, \dots, t \}$.
  \end{proposition}

Using Proposition~\ref{Ftg-prop}, we can complete the induction argument for DET-REDUCE.
\begin{proposition}
  \label{Ftg-prop2}
  If condition (I) holds up to $(t-1, s)$, then for every set $X \subseteq V$, and all integers $j, \tau$ with $j \geq 1, 1 \leq \tau < t \leq T+1$ we have $D_j^{(t)}(X) \leq \frac{\Delta^j (\log m)^{1.01}}{\tau} + \tau \Delta^j (\log m)^{-g(j+x+1) + 4.01}$.
\end{proposition}
\begin{proof}
Condition (I) up to round $(t-1,s)$ ensures that $\Phi^{(t-1, s)}_{\text{S2},X, j, \tau, t} < 1$. Therefore, by Proposition~\ref{Ftg-prop}, we have $D_j^{(k)}(X) \leq \frac{\Delta^j (\log m)^{1.01}}{\tau}$ for some $k \in \{t- \tau+1, \dots, t \}$. Also, by Theorem~\ref{Fmain-result-det}, $D_j(X)$ increases by at most $\Delta^j (\log m)^{-g(j+1+|X|) + 4.01}$ in each such round. Thus,
\begin{align*}
  D_j^{(t)}(X) &\leq D_j^{(k)}(X) + (t-k) \Delta^j (\log m)^{-g(j+x+1) + 4.01} \leq \tfrac{\Delta^j (\log m)^{1.01}}{\tau} + \tau \Delta^j (\log m)^{-g(j+x+1) + 4.01}  \qedhere
 \end{align*}
\end{proof}

\begin{proposition}
Condition (P1) is satisfied.
\end{proposition}
\begin{proof}
There is a potential summand (S1) for each choice of integers $t,j,k$, and each $X \subseteq V$. There are $O(\log \log m)$ choices for $j,k$ and there are $T = (\log m)^{2^{r+2}}$ choices for $t$; observe that $(\log m)^{2^r} \leq n$ by our condition on $r$, so this is at most $O(m^4)$ choices. 

There appear to be $2^n$ choices for $X$, which would be exponential.  However, observe that these summands are only non-trivial if $X$ is a subset of an edge of the original input hypergraph $G$. There are $m$ edges and each edge has at most $2^r \leq O(\log m)$ subsets, so in total the number of summands is at most $O(m \log m)$.

In all, there are at most $O(m^{5} \log m)$ potential summands of type (S1). A similar argument applies to the potential summands of type (S2).
\end{proof}

\begin{proposition}
\label{Fbb1}
The induction condition (I) hold for all rounds $t = 1, \dots, T+1$
\end{proposition}
\begin{proof}
We prove by induction on $t$ that (I) holds up to $(t,0)$ for every $t$ in this range. For the base case $t = 1$, note that there are fewer than $m^{100}$ summands, and each $\sigma$ has $\Phi^{(0,s)}_{\sigma} \leq m^{-100}$. Therefore, $\Phi^{(0,s)} < 1$. By hypothesis, the input hypergraph $G^{(1)}$ is $\Delta$-constrained. Thus, by (P5), $\Phi^{(1,0)} \leq \Phi^{(0,s)}$.

Now suppose that (I) holds up to $(t-1,0)$. So  $G^{(1)}, \dots, G^{(t-1)}$ are $\Delta$-semiconstrained and $\Phi^{(t-1,0)} < 1$. We will show that (I) holds up to $(t-1, i)$ by induction on $i$ for $i = 0, \dots, s$. By the induction hypothesis, $\Phi^{(t-1,i-1)} < 1$.  So property (P4) ensures that $\bE[ \Phi^{(t,i)}_{\sigma}] \leq \Phi^{(t,i-1)}_{\sigma}$ for every summand $\sigma$, when $B^{(t,i)}$ is drawn from $\Omega$. Therefore, $\bE[ \Phi^{(t,i)}] \leq \Phi^{(t,i-1)}$. Since DET-REDUCE searches the support of $\Omega$ to minimize $\Phi^{(t,i)}$, this means that $\Phi^{(t,i)} \leq \Phi^{(t,i-1)} < 1$.

This shows that (I) holds up to $(t-1,s)$. We next claim that $G^{(t)}$ is $\Delta$-semiconstrained. Consider any integer $j \geq 1$ and set $X \subseteq V$ with $|X| = x \leq r - j$. We need to show that $D_j^{(t)}(X) \leq 2 \Delta^j (\log m)^{-g(j+x)}$. 

Let $\tau =  (\log m)^{g(j+x+1) - g(j+x) - 4.01}$. If $t \leq \tau$ then by Theorem~\ref{Fmain-result-det} the migration into $N_j(X)$ in each previous round is at most $\Delta^j (\log m)^{-g(x+j+1) + 4.01}$ and so
\begin{align*}
D_j^{(t)}(X) &\leq D_j^{(1)}(X) + (t-1) \Delta^j  (\log m)^{-4.01 + g(x+j+1)} \\
&\leq \Delta^j (\log m)^{-g(j+x)} + \Delta^j (\log m)^{g(j+x+1) - g(j+x) - 4.01}  (\log m)^{4.01 - g(x+j+1)} \\
&= 2 \Delta^j (\log m)^{-g(j+x)}
\end{align*}

If $t > \tau$, then Proposition~\ref{Ftg-prop2} gives
  \begin{align*}
\label{Frr1}
D_j^{(t)}(X) &\leq \frac{\Delta^j (\log m)^{1.01}}{\tau} + \tau \Delta^j (\log m)^{-g(j+x+1) + 4.01} \\
&= \Delta^j (\log m)^{-g(j+x)} ( (\log m)^{5.02 - g(j+x+1) + 2 g(j+x)} + 1 )\\
&= \Delta^j (\log m)^{-g(j+x)} ( (\log m)^{-3.98} + 1) \leq 2 \Delta^j (\log m)^{-g(j+x)}
\end{align*}

As this holds for all such $X, j$, we have shown that $G^{(t)}$ is $\Delta$-semiconstrained. 

Finally, the property (P5) implies that $\Phi^{(t,0)} \leq \Phi^{(t-1,s)} < 1$. Thus, (I) holds up to $(t,0)$, completing the induction.
\end{proof}

\begin{theorem}
\label{Fbig-thm-det}
$G^{(T+1)}$ is $\tfrac{1}{2} \Delta$-constrained.
\end{theorem}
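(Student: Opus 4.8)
The plan is to replay the proof of Theorem~\ref{Fdecrease-thm} in the deterministic setting, now driving the final contradiction with the potential summands (S3) rather than with the probabilistic bound of Proposition~\ref{Fbound-prop}. I would fix $X \subseteq V$ with $|X| = x$ and $0 < x < r$ together with an integer $j \geq 1$, and show $D_j^T(X) \leq (0.99 v)^j (\log m)^{-g(j+x)}$; since being $0.99v$-constrained is exactly the conjunction of these inequalities over the relevant $X$ and $j$ (the boundary cases $x \in \{0,r\}$ being vacuous or trivial exactly as in the randomized argument), this will finish the theorem. Unlike in the randomized version, no probabilistic union bound is needed: the running induction has already forced every potential summand below $1$ at stage $(T,s)$, and each relevant pair $(X,j)$ contributes its own (S3) summand.

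I would argue by contradiction, assuming $D_j^T(X) > (0.99 v)^j (\log m)^{-g(j+x)}$. The first tool is the deterministic migration bound, Theorem~\ref{Fmain-result-det}, which says $D_j(X)$ grows by at most $v^j (\log m)^{-g(x+j+1)+5.01}$ per timestep; reading this inequality backwards from time $T$ yields $D_j^{T-\ell}(X) \geq D_j^T(X) - \ell\, v^j (\log m)^{-g(x+j+1)+5.01}$ for every $\ell \in \{0, \dots, \tau_2\}$. The second tool is the arithmetic check that $\tau_2$ and $\gamma_2$ are calibrated so that this subtracted term never exceeds $\gamma_2 = \tfrac12 (0.99 v)^j (\log m)^{-g(j+x)}$: using $g(l) = 2^{l+2} - 9$, so that $g(x+j+1) - g(x+j) = 2^{x+j+2}$, the relation reduces to the identity $\tau_2\, v^j (\log m)^{-g(x+j+1)+5.01} = \tfrac12 (0.99 v)^j (\log m)^{2^{x+j+2} - g(x+j+1)} = \gamma_2$. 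Hence $D_j^u(X) \geq 2\gamma_2 - \gamma_2 = \gamma_2$ for every $u \in \{T - \tau_2, \dots, T\}$.

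This contradicts the (S3) summand. Because (I2) holds up to $(T,s)$, we have $\Phi^{T,s} < 1$, so every potential summand is $< 1$ at $(T,s)$; in particular the family $\phi$ that Proposition~\ref{Ftg-prop} supplies for this $X$, for $k = j$, parameters $\tau_2,\gamma_2$, and $t = T$ satisfies $\phi^{T,s} < 1$. This summand is legitimately present because $\tau_2 \leq T$ and $\tau_2 \gamma_2 v^{-j} \geq (\log m)^{2.01}$, both recorded when (S3) was introduced. Property~(1) of Proposition~\ref{Ftg-prop} then forces $D_j^u(X) \leq \gamma_2$ for some $u \in \{T - \tau_2, \dots, T\}$, contradicting the previous paragraph. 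Ranging over all relevant $X$ and $j$ then gives the theorem.

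The only place I expect genuine care to be needed is the exponent bookkeeping in the second step: one must keep straight the interaction of $g(x+j)$, $g(x+j+1)$, the $5.01$ slack from Theorem~\ref{Fmain-result-det}, and the $(0.99)^j$ and $(0.99)^{2j}$ factors — the latter absorbed via $2^j \leq 2^r \leq \frac{\log m}{\log\log m}$, exactly as in Theorem~\ref{Fdecrease-thm} — together with the verification $\tau_2 \leq T$ (which holds since $x+j \leq r-1$ so $2^{x+j+2} < 2^{r+2}$). Everything else is a mechanical transcription of the randomized argument with $\log n \mapsto \log m$, $f \mapsto g$, and Proposition~\ref{Fbound-prop} replaced by the derandomized collapse summand of Proposition~\ref{Ftg-prop}.
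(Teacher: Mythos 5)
Your proposal is correct and follows essentially the same route as the paper: contradiction via the backwards migration bound from Theorem~\ref{Fmain-result-det} over the window $\{T-\tau_2,\dots,T\}$, then invoking the (S3) summand of Proposition~\ref{Ftg-prop} with parameters $\tau_2,\gamma_2$. Your calibration of $\tau_2$ with exponent $2^{x+j+2} = g(x{+}j{+}1)-g(x{+}j)$ is in fact the consistent choice (the paper's displayed definition of $\tau_2$ uses $2^{j+x+1}$, but its own verification that $\tau_2\gamma_2 v^{-j} = \tfrac14(0.99)^{2j}(\log m)^{3.99}$ only works with your exponent), so this is a typo in the paper rather than a defect in your argument.
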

\begin{proof}
  
  Consider any integer $j \geq 1$ and set $X \subseteq V$ with $|X| = x \leq r - j$ and let
  $$
  \tau = \tfrac{1}{2} 2^{-j} (\log m)^{g(j+x+1) - g(j+x) - 4.01}
  $$

  We can easily check that $\tau \leq T$. By Proposition~\ref{Fbb1}, (I) holds up to $(T,s)$. Therefore, applying Proposition~\ref{Ftg-prop2} at round $t = T+ 1$ gives
    \begin{align*}
    D_j^{(T+1)}(X) &\leq \frac{\Delta^j (\log m)^{1.01}}{\tau} + \tau \Delta^j (\log m)^{-g(j+x+1) + 4.01} \\
&= 2^{j+1} \Delta^j (\log m)^{5.02 - g(j+x+1) + g(j+x)} + \tfrac{1}{2} (\tfrac{\Delta}{2})^j (\log m)^{-g(j+x)} \\
&= (\tfrac{\Delta}{2})^j (\log m)^{-g(j+x)} \Bigl( 2^{2j+2} (\log m)^{5.02 - g(j+x+1) + 2 g(j+x)} + \tfrac{1}{2} \Bigr)
  \end{align*}

  As $j \leq r \leq \log_2( \frac{\log n}{\log \log n} )$,  we have $2^{2 j + 2} \leq (\log m)^2$, and therefore
  \begin{align*}
    D_j^{(T)}(X) &\leq (\tfrac{\Delta}{2})^j (\log m)^{-g(j+x)} \Bigl( (\log m)^{7.02 - g(j+x+1) + 2 g(j+x)} + \tfrac{1}{2} \Bigr) \\
    &= (\tfrac{\Delta}{2})^j (\log m)^{-g(j+x)} \Bigl( (\log m)^{-1.98} + \tfrac{1}{2} \Bigr) \leq (\tfrac{\Delta}{2})^j (\log m)^{-g(j+x)}
  \end{align*}

  Since this holds for arbitrary $X, j$ it implies that $G^{(T+1)}$ is $\tfrac{\Delta}{2}$-constrained.
\end{proof}

{
\renewcommand{\thetheorem}{\ref{Fdet-thm-overall}}
\begin{theorem}
There is a deterministic algorithm, running in $(\log n)^{2^{r+3} + O(1)}$ time and $\poly(m,n)$ processors, to produce an MIS of a rank-$r$ hypergraph.
\end{theorem}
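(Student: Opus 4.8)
The plan is to build the deterministic MIS algorithm out of DET-REDUCE (Algorithm~\ref{alg-det}) in exactly the way that Algorithm~\ref{algo:find-mis} builds the randomized algorithm out of REDUCE, with Theorem~\ref{Fbig-thm-det} now playing the role that Theorem~\ref{Fdecrease-thm} played in the proof of Theorem~\ref{Fmain-thm}. First I would dispose of the boundary regimes. If $2^r > \frac{\log n}{\log\log n}$, the trivial deterministic sequential algorithm already runs in $(mn)^{O(1)}$ time, which in this regime is $(\log n)^{2^{r+3}+O(1)}$ by the same elementary estimate used for the randomized algorithm (here $2^{r+3} \ge \frac{8\log n}{\log\log n}$ forces $(\log n)^{2^{r+3}} \ge n^{8}$). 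I would also reduce to the case $m \ge n$: the isolated vertices of $G$ may be committed to the independent set directly, and the surviving vertices then number at most $rm$, so that $\log(\#\text{vertices}) = \Theta(\log m)$ and the estimates below apply verbatim with $m$ in place of $n$ inside all logarithms (falling back to the sequential algorithm if even this residual instance is small). Thus I may assume $m \ge n$ and $2^r \le \frac{\log n}{\log\log n}$.

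In the main regime I would run the DET-REDUCE analogue of Algorithm~\ref{algo:find-mis}: pick $v_0$ equal to a fixed power of $m$ large enough that the input graph is $v_0$-constrained in the deterministic sense — since $D_j(X) \le m$ for all $X,j$ and $g(j+|X|) \le g(r) = 2^{r+2}-9$, some $v_0 = m^{O(1)}$ works — and then repeatedly call $\text{DET-REDUCE}(G, v_0 \cdot 0.99^i)$ for $i = 1, \dots, \lfloor \log_{1/0.99}(v_0) \rfloor = O(\log m)$. By Theorem~\ref{Fbig-thm-det} each call turns a $v$-constrained graph into a $0.99v$-constrained one, so after $O(\log m)$ iterations the residual graph is $1$-constrained. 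Because $g(l) = 2^{l+2}-9 \ge 7$ for every $l \ge 2$, a $1$-constrained graph satisfies $D_j(X) \le (\log m)^{-g(j+|X|)} < 1$, hence $N_j(X) = \emptyset$, for every nonempty $X$ with $j \ge 1$ and every $X = \emptyset$ with $j \ge 2$; together with the fact that MARK never creates the empty edge, this forces the residual graph to be the isolated graph, so — exactly as in the proof of Theorem~\ref{Fmain-thm} — an MIS has been produced.

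For the complexity I would simply multiply out the loop sizes: $O(\log m)$ DET-REDUCE calls, each with $T = (\log m)^{2^{r+2}}$ rounds of $s = \lceil \log_2 v \rceil = O(\log m)$ stages, where in each stage the algorithm enumerates the $m^{O(1)}$ points of $\text{support}(\Omega)$ in parallel and, for each candidate, evaluates the potential function as a sum of at most $m^{100}$ summands by (P1), each computable in $(\log m)^{O(1)}$ time and $m^{O(1)}$ processors by (P2); forming the residual graph is in NC just as for the randomized algorithm. This gives $(\log m)^{2^{r+2}+O(1)}$ time and $(mn)^{O(1)}$ processors. Converting units via $\log m \le r\log n$ and $r \le \log_2\log n$ (so $\log\log m \le 2\log\log n$) replaces $(\log m)^{2^{r+2}}$ by at most $(\log n)^{2^{r+3}}$ and $(\log m)^{O(1)}$ by $(\log n)^{O(1)}$, yielding the claimed $(\log n)^{2^{r+3}+O(1)}$ time bound.

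Since the hard analytic work is already contained in Theorem~\ref{Fbig-thm-det}, I do not expect a genuine obstacle in this last step; the two places needing a little care are (i) checking that $v_0$ can be kept a small enough power of $m$ to stay inside the regime $v \le m^{3}$ posited in the DET-REDUCE analysis — a non-issue, since that threshold is used only to bound $4^{s} = O(m^{O(1)})$ in the base-case estimate for $\Psi^{0}_Y$ and may be relaxed to any fixed power of $m$ after adjusting the constants in (P1)--(P4) — and (ii) the $\log m \to \log n$ conversion, which must be done so that the factor-$4$ loss built into $g(l) = 2^{l+2}-9$ (versus $f(l) = 2^{l}-3.5$) and the factor-$\le 2$ loss from $\log m = O(\log n \log\log n)$ together account for exactly the factor $2^{3}$ separating $2^{r+3}$ from $2^{r}$, all lower-order polylogarithmic factors being absorbed into the $O(1)$ in the exponent; this bookkeeping leans essentially on the standing hypothesis $2^r \le \frac{\log n}{\log\log n}$.
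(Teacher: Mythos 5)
Your proposal is correct and follows essentially the same route as the paper: reduce to $m\ge n$ and $2^{r+O(1)}\le\frac{\log n}{\log\log n}$, run the DET-REDUCE analogue of FIND-MIS with $v_0=m^{O(1)}$ (the paper takes $v_0=m^3$, which it verifies makes the input $v_0$-constrained and stays within the $v\le m^3$ regime), invoke Theorem~\ref{Fbig-thm-det} at each of the $O(\log m)$ iterations to reach a $1$-constrained (hence isolated) residual graph, and convert $(\log m)^{2^{r+2}+O(1)}$ to $(\log n)^{2^{r+3}+O(1)}$ via $\log m\le r\log n\le(\log n)^2$. The only cosmetic difference is the preprocessing step (the paper marks one vertex per edge to leave at most $m$ vertices, whereas you discard isolated vertices to leave at most $rm$), which does not affect the asymptotics.
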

}
\begin{proof}
  Use Algorithm~\ref{algo:find-mis}, replacing REDUCE with DET-REDUCE.
\end{proof}

\section{Sparse hypergraphs}
\label{Fsparse-sec}
Bercea et al. \cite{bercea} introduced an MIS algorithm named SBL for hypergraphs with relatively few edges. We summarize this here:
\begin{algorithm}[H]
\centering
\begin{algorithmic}[1]
\State{Initialize $I = \emptyset$}
\While{$G^{\text{res}}_I \neq \emptyset$}
\State{Mark each vertex independently with probability $p$}
\State{If an edge with more than $r$ vertices is fully marked, unmark one arbitrary vertex.}
\State{Let $X$ denote the vertices which remain marked.}
\State{Find an MIS of the hypergraph $G^{\text{res}}_I[X]$, and add it to $I$.}
\EndWhile
\end{algorithmic}
\caption{The SBL algorithm}
\label{sbl-alg}
\end{algorithm}

To analyze this algorithm, let us define a vertex $v$ to be \emph{free} for an independent set $I$ if $v \notin I$ and $\{v \}$ is not an edge of $G^{\text{res}}_I$. An independent set $I$ is maximal iff there are no free vertices for $I$. The key insight of \cite{bercea} is that since line (6) produces an MIS, the vertices in $X$ become non-free. Thus for $p$ chosen appropriately, the number of free vertices drops by a $m^{-1/r}$ factor in each round.  We now derandomize this step of choosing the vertex set $X \subseteq V$.
\begin{proposition}
  \label{Ff1}
Suppose that hypergraph $G$ has $m \geq 1$ edges and $n'$ free vertices for an independent set $I$. For any integer $r \geq 1$, there is a deterministic algorithm in $\polylog(m,n)$ time and $\poly(m,n)$ processors to produce a set $X \subseteq V$ of free vertices, such that $G^{\text{res}}_I[X]$ has rank $r$ and $|X| \geq \Omega( n' m^{-1/r} )$.
\end{proposition}
\begin{proof}
We assume that $n' > 0$ as otherwise this is trivial.  Let $p = (2 m)^{-1/r}$ and $G' = G^{\text{res}}_I$, and consider the following random process: we put each free vertex into $Y$ independently with probability $p$; if $e \subseteq Y$ for any edge $e$ of $G'$ with $|e| > r$, then we remove one arbitrary vertex of $e$ from $Y$. We let $X$ denote the resulting vertex set. This process ensures that $G'[X] = G^{\text{res}}_I[X]$ has rank $r$ and $\bE[|X|] \geq n' p - m p^{r+1} \geq  n' p - p/2 \geq n' p /2$.
  
  We derandomize this process using a general methodology of Sivakumar \cite{sivakumar}. Observe that, for any edge $e$, the event $e \subseteq Y$ can be represented via a ``log-space statistical test''; specifically, as we process the vertices in order, we check whether every vertex $v \in e$ is added to $Y$. There are a polynomial number of such tests (one for each edge), so one can efficiently construct a probability space $\Omega$ with $\poly(m,n)$ support size fooling them all to error $\epsilon = \frac{p}{4 m} \geq \frac{1}{\poly(m,n)}$.

  When the selection vector $Y$ is drawn from $\Omega$, we have $\bE[|X|] \geq np - m p^{r+1} - m \epsilon \geq n' p - p /2 - p/4 \geq n'p/4 $. In particular, there is at least one value in the support of $\Omega$ such that $|X| \geq n' p/4 \geq \Omega(n' m^{-1/r})$. Since $X$ has polynomial support, we can search it efficiently in $\polylog(m,n)$ time and $\poly(m,n)$ processors.
\end{proof}

Thus, the following DSBL algorithm is a derandomized version of SBL:
\begin{algorithm}[H]
\centering
\caption{The DSBL algorithm}
\label{dsbl-alg}
\begin{algorithmic}[1]
\State{Initialize $I = \emptyset$}
\For{$t = 1, 2, \dots, $ until $G^{\text{res}}_I$ is empty}
\State{Apply Proposition~\ref{Ff1} to obtain vertex set $X_t$.}
\State{Find an MIS of the hypergraph $G^{\text{res}}_I[X_t]$, and add it to $I$.}
\EndFor
\end{algorithmic}
\end{algorithm}

{
\renewcommand{\thetheorem}{\ref{Fth-sbl1}}
\begin{theorem}
There is an algorithm time for hypergraph MIS using  $\exp(O( \frac{\log(m n)}{\log \log(m n)})) $ expected time and $O(n + m \log n)$ processors. There is a deterministic algorithm for hypergraph MIS in $\exp(O( \frac{\log (mn)}{\log \log (m n)}))$ time and $\poly(m,n)$ processors.
\end{theorem}
}
\begin{proof}
We assume $m \geq 1$ as otherwise this is trivial. The deterministic (respectively randomized) algorithm is to apply DSBL (respectively SBL) with $r = \lceil \log_2  \tfrac{\log(m n)}{(\log \log (m n))^2}  \rceil$. We only analyze the deterministic case, as the randomized algorithm is nearly idential.

For each round $t$, let $n'_t$ denote the number of free vertices for $I$. Note that $|X_t| \geq \Omega( n'_t m^{-1/r})$ and so $n'_{t+1} \leq n'_t - |X_t|   \leq n'_t (1 - \Omega(m^{-1/r}))$. This implies that, for some $t = O(m^{1/r} \log n)$ we have $n'_t = 0$; then the independent set $I$ is maximal and DSBL algorithm terminates.  With this parameter $r$ we have $m^{1/r} = \exp( O( \frac{\log (m n)}{\log \log (m n)}))$.

By Proposition~\ref{Ff1}, it requires $\polylog(m, n)$ time to produce the set $X_t$.  Each hypergraph $G_t[X_t]$ has rank $r$ by construction, so we can find an MIS of $G_t[X_t]$ using Theorem~\ref{Fdet-thm-overall} in time $(\log n)^{2^{r+3}+O(1)} = \exp( O(\frac{\log (m n)}{\log \log (m n)}))$.
\end{proof}

\section{Conclusion}
\label{conclude-sec}
We have examined two related subjects: the algorithmic problem of hypergraph MIS, and the technical tool of derandomized concentration bounds for polynomials applied to independent random variables. Let us provide an overview of where these now stand and future directions for them.

At this stage, we have NC algorithms for hypergraph MIS of fixed rank $r$. We suspect that an efficient general MIS algorithm should exist; as far as we are aware, the randomized algorithm of Beame \& Luby \cite{beame-luby} is likely to already run in $\polylog(n)$ rounds for general hypergraphs. We note that our proof strategy for this algorithm, based on globally bounding the degree, is relatively weak. A similar algorithm and proof strategy was used for graph MIS by Blelloch, Fineman, \& Shun \cite{greedy2}, showing a convergence time of $O(\log^2 n)$ rounds. Fischer \& Noever \cite{fischer} later provided a more sophisticated analysis showing that this algorithm in fact runs in $O(\log n)$ rounds, matching Luby's MIS algorithm (which is based on tracking global edge count). Instead of simply tracking degree, they analyze long dependency chains through the vertices. We do not know how to extend such analysis, which is already complex for ordinary graphs, to general hypergraphs.

Even if we cannot obtain algorithms for general hypergraphs, we still see much room for improvement for fixed-rank hypergraphs. A runtime of $c_r (\log n)^{\poly(r)}$ (where $c_r$ could be an arbitrary parameter) would already be a significant advance. We are not aware of any algorithm with a runtime even of $(\log n)^{O(1.99^r)}$.

The main technical tool of this algorithm is (derandomization of) concentration for polynomials. This is an important subject in probability theory on its own, and is likely to have applications to other algorithms. Our derandomization method is based on conditional expectations with appropriate potential functions. Unfortunately, this is quite messy and is also somewhat specialized to the parameter ranges needed for the hypergraph MIS algorithm. 

By contrast, for the \emph{randomized} analysis of concentration bounds for polynomials, Schudy \& Sviridenko \cite{schudy-sviridenko} have clean bounds in terms of easy-to-compute statistics. These bounds apply to a much larger class of polynomials, including settings in which the underlying input variables are not Bernoulli. We see in Proposition~\ref{Fmigrate-prop1} how to apply these bounds to specific settings by a few simple computations.

The derandomization of sums of independent random variables also has a rich, robust theory behind it. For example, Alon \& Srinivasan \cite{alon-srin} give simple-to-compute potential functions which can be used for conditional expectations in probability spaces with almost-independence. Also, since sums of variables can usually be computed in logarithmic space, Sivakumar's method \cite{sivakumar} fools them in a nearly black-box way (we have already seen an example of this, in Proposition~\ref{Ff1}). In general, randomized processes based on such sums can typically be derandomized using well-understood and high-level techniques.

Our hope is that there may be some way to extract the main ideas from our analysis of derandomized polynomial bounds and package them in a way which is as general and clean.

\section{Acknowledgments}
Thanks to Aravind Srinivasan and Ioana Bercea for helpful discussions. Thanks to anonymous conference and journal reviewers for many suggestions and corrections.

\appendix 

\section{Analysis of edge collapse}
\label{collapse-proofs}
In this section, we consider some set $X \subseteq V$ with $|X| = x > 0$, and analyze how $X$ collapses in both the randomized and deterministic algorithms. The analysis for the randomized algorithm is very similar to an argument given by Beame \& Luby \cite{beame-luby}, and so we only provide a sketch here.

\begin{repproposition}{Fcollapse-prop}
If $G^{(t)}$ is $\Delta$-semiconstrained, then $$
\Pr( \text{$X$ collapses at round $t$}) \geq \tfrac{1}{4} \sum_{k=1}^r \Delta^{-k} D^{(t)}_k(X).
$$
\end{repproposition}
\begin{proof}
  Let us fix round $t$, and we omit the superscript $(t)$ from the notations for readability. We begin by observing the inequality
  \begin{equation}
    \label{gt1}
  [ \text{$X$ collapses} ] \geq \sum_{Y \in N(X)} C(Y) \Bigl( 1 - \sum_{\substack{e \in G\\ e \cap Y \neq \emptyset}} C(e \setminus Y) - \sum_{\substack{Y' \in N(X) \\ Y \neq Y'}} C(Y' \setminus Y) \Bigr)
  \end{equation}

To show (\ref{gt1}), observe that, if $z$ is the total number of sets $Y \in N(X)$ with $C(Y) = 1$, then
$$
\sum_{Y \in N(X)} C(Y) \bigl( 1 - \sum_{\substack{Y' \in N(X) \\ Y \neq Y'}} C(Y' \setminus Y) \bigr) = z - \binom{z}{2}
$$

Consequently, the RHS of (\ref{gt1}) is only positive if $z = 1$, i.e. there is exactly one set $Y \in N(X)$ with $C(Y) = 1$. If $C(e \setminus Y) = 0$ for all edges $e$ intersecting $Y$, then $Y$ is added to the independent set, causing $X$ to collapse. Taking expectations of (\ref{gt1}), the probability that $X$ collapses is at least
$$
 \sum_{Y \in N(X)} \negthickspace \negthickspace \Pr(C(Y) = 1) \Bigl( 1 - \sum_{\substack{e \in G\\ e \cap Y \neq \emptyset}} \negthickspace \bE[ C(e \setminus Y) \mid C(Y) = 1]  -  \sum_{\substack{Y' \in N(X) \\ Y \neq Y'}} \negthickspace \bE[C(Y' \setminus Y) \mid C(Y) = 1] \Bigr)
    $$

Using the fact that $G$ is $\Delta$-semiconstrained, one can show that, for any fixed $Y \in N(X)$, the conditional expectations $\bE[\sum_{\substack{e \in G\\ e \cap Y \neq \emptyset}} C(e \setminus Y) \mid C(Y) = 1]$ and $\bE[\sum_{\substack{Y' \in N(X) \\ Y \neq Y'}} C(Y' \setminus Y) \mid C(Y) = 1]$ are both at most $0.01$.  (See Proposition~\ref{Fcollapse-prop-det} for further details). Therefore, 
\begin{equation}
   \label{gt3}
          \Pr(\text{$X$ collapses}) \geq \sum_{Y \in N(X)} \Pr(C(Y) = 1) ( 1 - 0.01 - 0.01 ) \geq \tfrac{1}{4} \sum_{Y \in N(X)} \bE[C(Y)]
\end{equation}

Each $Y \in N_k(X)$ has $\bE[C(Y)] = \Delta^{-k}$ so the RHS of (\ref{gt3}) is equal to $\sum_{k=1}^r \Delta^{-k} D_k(X)$.
\end{proof}

Proposition~\ref{Fcollapse-prop} is not quite enough for the deterministic algorithm, which requires showing that the probability bound holds for an approximate-independence probability space and it can be witnessed by an easy-to-compute pessimistic estimator.

\begin{proposition}
\label{q2-prop}
  Let $Y \subseteq V$ be any set of size $y \leq L \log m$. When $B^{(t,1)}, \dots, B^{(t,s)}$ are drawn independently from $\Omega$, we have $\tfrac{1}{2} \Delta^{-y} \leq \Pr( C^{(t)}(Y) = 1 ) \leq 2 \Delta^{-y}$.
\end{proposition}
\begin{proof}
  For each $i = 1, \dots, s$, property (Q) gives $(1-\frac{1}{2s}) 2^{-y} \leq \Pr(\bigwedge_{v \in Y} B^{(t,i)}(v)=1) \leq (1+\frac{1}{2 s}) 2^{-y}$.   Therefore, overall we have $((1-\frac{1}{2 s}) 2^{-y})^s \leq \Pr( C^{(t)}(Y) = 1)  \leq ((1+\frac{1}{2 s}) 2^{-y})^s$.   Now note that $2^s = \Delta$, and that $ (1 + \frac{1}{2 s})^s \leq 2$ and $(1 - \frac{1}{2 s})^s \geq \tfrac{1}{2}$.
\end{proof}

We now introduce a function $H^{(t)}$, which serves as a pessimistic estimator for the event that $X$ fails to collapse at round $t$. This quantity $H^{(t)}$ is a function of the bits $B^{\leq(t,s)}$.  We also define a related function $H^{(t,i)}$, which is the expectation of $H^{(t)}$, if the bits $B^{\leq(t,i)}$ are fixed and the bits $B^{(t,i+1)}, \dots, B^{(t,s)}$ are drawn independently from $\Omega$. We emphasize here that $H^{(t,i)}$ is completely determined by the function $H^{(t)}$ and that $H^{(t,s)} = H^{(t)}$.

\begin{proposition}
  \label{Fcollapse-prop-det}
  For each round $t \geq 1$ we can define the quantity $H^{(t)}$ to have the following properties:
  \begin{enumerate}
\item[(A1)] $H^{(t)}$ is an non-negative integer.
\item[(A2)] If $H^{(t)} = 0$, then $X$ collapses at round $t$.
\item[(A3)] If $G^{(t)}$ is $\Delta$-semiconstrained, then $H^{(t,0)}  \leq 1 - \tfrac{1}{4} \sum_{k=1}^r \Delta^{-k} D^{(t)}_k(X)$.
\item[(A4)] The quantities $H^{(t,i)}$ can be computed using $\poly(m,n)$ processors and $\polylog(m,n)$ time.
\end{enumerate}
\end{proposition}
\begin{proof}
For notational clarity, we omit the superscript $(t)$ throughout the remainder of this proof. We define $H$ by
$$
H = 1 - \sum_{Y \in N(X)} C(Y) \Bigl( 1 - \sum_{\substack{e \in G\\ e \cap Y \neq \emptyset}} C(e \setminus Y) - \sum_{\substack{Y' \in N(X) \\ Y \neq Y'}} C(Y' \setminus Y) \Bigr)
$$

To show (A1), observe that, if $z$ is the total number of sets $Y \in N(X)$ with $C(Y) = 1$, then
$$
\sum_{Y \in N(X)} C(Y) \bigl( 1 - \sum_{\substack{Y' \in N(X) \\ Y \neq Y'}} C(Y' \setminus Y) \bigr) = z - \binom{z}{2}
$$

Consequently, $H$ is a non-negative integer, and $H = 0$ only if $z = 1$, i.e. there is exactly one set $Y \in N(X)$ with $C(Y) = 1$. If $C(e \setminus Y) = 0$ for all edges $e$ intersecting $Y$, then $Y$ is added to the independent set, causing $X$ to collapse. This implies (A2). 

To show (A4), we can calculate $\bE[B^{(t,i)}(Y)]$ for any set $Y \subseteq V$ by enumerating over all possible values of $B^{(t,i)}$ in the support of $\Omega$. Since $\Omega$ has a support of size $\poly(m,n)$, this can be done in $\polylog(m,n)$ time and $\poly(m,n)$ processors. The conditional expectation, when $B^{(\leq t,i)}$ is fixed and $B^{(t,i+1)}, \dots, B^{(t,s)}$ are independently drawn from $\Omega$, is simply the $(s-i)$ power of this. 

Finally, to show (A3), we use Proposition~\ref{q2-prop} to estimate:
\begin{equation}
  \label{hheqn}
\bE[H \mid B^{\leq(t-1,s)}] \leq 1 - \sum_{Y \in N(X)} \tfrac{1}{2} \Delta^{-|Y|} \Bigl( 1 - \sum_{\substack{e \in G \\  e \cap Y \neq \emptyset}} 2 \Delta^{-|e \setminus Y|} - \sum_{\substack{Y \neq Y'\\ Y' \in N(X)}} 2 \Delta^{-|Y' \setminus Y|} \Bigr)
\end{equation}

Let us now fix $Y \in N_k(X)$ for some integer $k \geq 1$ and estimate the sums in (\ref{hheqn}) over edges $e \in G$ and over $Y' \in N(X)$. For the first quantity, set $U = e \cap Y \neq \emptyset$ and so we have:
\begin{align*}
  \sum_{\substack{e \in G \\  e \cap Y \neq \emptyset}} \negthickspace \Delta^{-|e \setminus Y|} &=  \sum_{\substack{U \subseteq Y \\ U \neq \emptyset}} \sum_{e \in G: e \cap Y = U} \Delta^{-|e \setminus U|} \leq  \sum_{\substack{U \subseteq Y \\ U \neq \emptyset}} \sum_{j = 0}^r D_j(U) \Delta^{-j}
  \end{align*}

Note that $Y \in N(X)$ so that $X \cup Y$ is an edge in $G$. Since $G$ is a residual hypergraph, it does not have nested edges and therefore any $U \subseteq Y$ cannot be an edge, i.e. $D_0(U) = 0$. Using this fact and the fact that that $G$ is $\Delta$-semiconstrained, we get:
\begin{align*}
\sum_{ \substack{U \subseteq Y \\ U \neq \emptyset}} \sum_{j = 0}^r D_j(U) \Delta^{-j} \leq \sum_{\substack{ U \subseteq Y \\ U \neq \emptyset}} \sum_{j = 1}^r 2 \Delta^j (\log m)^{-g(j+|U|)} \Delta^{-j} = 2 \sum_{u=1}^{k}  \sum_{j = 1}^r \tbinom{k}{u} (\log m)^{-g(j+u)}
\end{align*}

The summand here decreases exponentially in both $j$ and $u$. Consequently, the overall sum is bounded by a constant times the summand at $j = u = 1$, namely
\begin{align*}
\sum_{\substack{e \in G \\  e \cap Y \neq \emptyset}} \Delta^{-|e \setminus Y|} \leq O \bigl( k (\log m)^{-g(2)} \bigr)  \leq O( \frac{\log n}{\log \log n} (\log m)^{-7} ) \leq 0.01
\end{align*}

Let us next estimate the sum over $Y' \in N(X)$ in (\ref{hheqn}). Since $G$ is a residual hypergraph and $Y \neq Y'$, we cannot have $Y \subseteq Y'$. Let us therefore set $U = Y \cap Y' \subsetneq Y$, and we obtain:
{\allowdisplaybreaks
\begin{align*}
  \sum_{\substack{ Y' \in N (X)  \\ Y' \neq Y}} \Delta^{-|Y' \setminus Y|} &= \sum_{U \subsetneq Y} \sum_{j=1}^r \sum_{\substack{ Y' \in N_j (X)  \\ Y' \cap Y = U}} \Delta^{-j + |U|}  \leq \sum_{U \subsetneq Y} \Delta^{-j + |U|} D_{j-|U|} (X \cup U) \\
  &\leq \sum_{j=1}^r \sum_{U \subsetneq Y} 2 \Delta^{-j + |U|} \Delta^{j-|U|} (\log m)^{-g(k+x)} \qquad \text{as $G$ is $\Delta$-semiconstrained} \\
  &=2 \sum_{j=1}^r (\log m)^{-g(j+x)} \sum_{U \subsetneq Y} 1 =  (2^k - 1) \times 2 \sum_{j=1}^r (\log m)^{-g(j+x)} 
\end{align*}
}

We observe that $2^k \leq 2^r \leq \frac{\log m}{\log \log m}$ and thus these summands decrease exponentially, so this is at most $O( (\log m) \times (\log m)^{-g(2)}) = O( (\log m)^{-7}) \leq 0.01$.
  
Thus, we have shown that both of the sums over $e \in G$ and $Y' \in N(X)$ are bounded by $0.01$. Substituting into (\ref{hheqn}), we have shown:
\[
\bE[H \mid B^{\leq(t-1,s)}] \leq 1 - \sum_{Y \in N(X)} \tfrac{1}{2} \Delta^{-|Y|} \bigl( 1 - 2 \times 0.01 - 2 \times 0.01 \bigr) \leq 1 - \tfrac{1}{4}  \sum_{k=1}^r D_k^{(t)}(X) \Delta^{-k} \qedhere
\]
\end{proof}

We are now ready to define the potential summands $\Phi_{\text{S2}, X, j, \tau, t}$ as:
$$
\Phi_{\text{S2}, X, j, \tau, t}^{(\ell, i)} =
\begin{cases}
(1 - \frac{(\log m)^{1.01}}{4 \tau})^{\tau} & \text{if $\ell < t - \tau$} \\
 H^{(\ell, i)}   R_{\ell} (1 - \frac{(\log m)^{1.01}}{4 \tau})^{t-\ell}  & \text{if $t - \tau  \leq \ell < t$} \\
0 & \text{if $\ell \geq t$}
\end{cases}
$$
where, for each $b \in \{t - \tau+1, \dots, t \}$, we define the indicator variable $R_b$ by:
$$
R_b = \Bigl[ \bigwedge_{k=t - \tau+1}^b D_j^{(k)}(X) > \frac{\Delta^j (\log m)^{1.01}}{\tau} \Bigr]
$$ 

\begin{repproposition}{Ftg-prop}  
For all integers $j, t, \tau$ with $1 \leq j \leq r$ and $1 \leq \tau < t \leq T$, the potential summand $\Phi_{\text{S2}, X, j, \tau, t}$ satisfies properties (P2) -- (P5). Furthermore, if $\Phi_{\text{S2}, X,j, \tau, t}^{(t-1,s)} < 1$, then $D_j^{(k)}(X) \leq \frac{\Delta^j (\log m)^{1.01}}{\tau}$   for some $k \in \{t - \tau+1, \dots, t \}$.
\end{repproposition}
\begin{proof}
To simplify the notation,  we fix $j, \tau, t$ and we write  $\phi^{(\ell, i)}$ as shorthand for $\Phi^{(\ell,i)}_{\text{S2}, X, j, \tau, t}$ throughout.

Property (P2) follows immediately from (A4). To show property (P3), we compute
$$
\phi^{(0,s)} = (1 - \frac{(\log m)^{1.01}}{4 \tau})^{\tau} \leq e^{-\frac{(\log m)^{1.01}}{4}} \leq m^{-100}
$$

For property (P4), note that if $\ell < t - \tau$ or $\ell \geq t$ then $\phi^{(\ell,i+1)} = \phi^{(\ell,i)}$ and (P4) holds vacuously. If $t - \tau \leq \ell < t$, then property (P4) holds immediately by the law of iterated expectations.

Property (P5) holds trivially for $\ell \geq t $ or $\ell < t - \tau$. We will prove it only for the case $t - \tau < \ell < t$; the case $\ell = t - \tau$ is nearly identical and we omit it here. 

First, note that $R_{\ell} = 0$ then $\phi^{(\ell,0)} = 0$, so (P5) holds trivially. So assume that $R_{\ell} = 1$, i.e. $D_j^{(\ell)} > \frac{\Delta^j (\log m)^{1.01}}{\tau}$ and $R_{\ell-1} = 1$, in which case we have
\begin{align*}
  \frac{ \phi^{(\ell, 0)} }{\phi^{(\ell-1, s)}} &= \frac{ H^{(\ell,0)} (1 - \frac{(\log m)^{1.01}}{4 \tau})^{t - \ell} }{   H^{(\ell-1,s)} (1 - \frac{ (\log m)^{1.01}}{4 \tau})^{t - \ell + 1} } =  \frac{H^{(\ell, 0)}}{H^{(\ell-1,s)} (1 - \frac{ (\log m)^{1.01}}{4 \tau})}
\end{align*}

By property (A1), $H^{(\ell-1,s)} = H^{(\ell - 1)}$ is a non-negative integer. Furthermore, $D_j^{(\ell)}(X) > \frac{\Delta^j (\log m)^{1.01}}{\tau}$ so the set $X$ has evidently not collapsed at round $\ell-1$. Therefore, by (A2), we must have $H^{(\ell-1,s)} \geq 1$. By property (A3), we have $H^{(\ell, 0)} \leq 1-\tfrac{1}{4} \Delta^{-j} D_j^{(\ell)} (X)$; note here that we are assuming that $G^{(\ell)}$ is $\Delta$-semiconstrained as we are aiming to show property (P5). With our bound on $D_j^{(\ell)}(X)$, this implies that $H^{(\ell, 0)} \leq 1-\frac{(\log m)^{1.01}}{4 \tau}$. Therefore, we get
$$
  \frac{ \phi^{(\ell, 0)} }{\phi^{(\ell-1, s)}} \leq  \frac{H^{(\ell, 0)}}{H^{(\ell-1,s)} (1 - \frac{ (\log m)^{1.01}}{4 \tau})}
  \leq \frac{1 - (\frac{ (\log m)^{1.01}}{4 \tau})}{1 \times  (1 - \frac{ (\log m)^{1.01}}{4 \tau})} = 1
  $$
  which shows (P5).
  
  To show the second result of the proposition, note that $\phi^{(t-1,s)} = R_{t-1} H^{(t-1,s)}$. Again, $H^{(t-1,s)}= H^{(t-1)}$ is an non-negative integer. So the only way that we can have $\phi^{(t-1,s)} < 1$ if $R_{t-1} = 0$ or $H^{(t-1)} = 0$. In the former case, we clearly have $D_j^{(k)}(X) \leq \frac{\Delta^j (\log m)^{1.01}}{\tau}$ for some $k \in \{t - \tau+1, \dots, t-1 \}$. In the latter case, $X$ collapses at round $t-1$ and hence $D_j^{(t)}(X) = 0$.
\end{proof}

\begin{thebibliography}{1}

\bibitem{alon-srin}
Alon, N., Srinivasan, A.: Improved parallel approximation of a class of integer programming problems. Automata, Languages and Programming pp. 562-573 (1996)

\bibitem{beame-luby}
Beame P, Luby M.: Parallel search for maximal independence given minimal dependence. Proceedings of the 1st annual ACM-SIAM Symposium on Discrete Algorithms (SODA), pp. 212-218 (1990)

\bibitem{bercea} Bercea, I., Goyal, N., Harris, D., Srinivasan, A.: On computing maximal independent sets of hypergraphs in parallel.  ACM Transactions on Parallel Computing 3-1, Article \# 5 (2016)

\bibitem{greedy2}
Blelloch, G., Fineman, J., Shun, J.: Greedy sequential maximal independent set and matching
are parallel on average. Proceedings of the 24th ACM Symposium on Parallelism in Algorithms and Architectures (SPAA), pp. 308-317 (2012)

\bibitem{dahlhaus}
Dahlhaus, E., Karpinski, M., Kelsen, P.: An efficient parallel algorithm for computing a maximal independent set in a hypergraph of dimension 3. Information Processing Letters 42, pp. 309-313 (1992)


\bibitem{fischer}
  Fischer, M., Noever, A.: Tight analysis of randomized greedy MIS. Proceedings of the 29th ACM-SIAM Symposium on Discrete Algorithms (SODA), pp. 2152-2160 (2018)
  
\bibitem{garrido}
Garrido, O., Kelsen, P., Lingas, A.: A simple NC-algorithm for a maximal independent set in a hypergraph of poly-log arboricity. Information Processing Letters 58, pp. 55-58 (1996)

\bibitem{kr} Karp, R., Ramachandran, V.: A survey of parallel algorithms for shared-memory machines. Handook of theoretical computer science volume A, pp. 869-941 (1990)

\bibitem{karp} Karp, R., Upfal, E., Wigderson, A.: The complexity of parallel search. Journal of Computer and System Sciences 36-2, pp. 225-253 (1988)

\bibitem{kelsen} Kelsen, P.: On the parallel complexity of computing a maximal independent set in a hypergraph. Proceedings of the 24th annual ACM Symposium on Theory of Computing (STOC), pp. 339-350 (1992)


\bibitem{kim-vu}
Kim, J., Vu, V.: Concentration of multivariate polynomials and its applications, Combinatorica 20-3, pp. 417-434 (2000)

\bibitem{kutten} Kutten, S., Nanongkai, D., Pandurangan, G., Robinson, P.: Distributed symmetry breaking in hypergraphs. Distributed Computing, pp. 469-483 (2014)


\bibitem{luby} Luby, M.: A simple parallel algorithm for the maximal independent set problem. SIAM Journal on Computing 15-4, pp. 1036-1053 (1986)

\bibitem{luczak} \L{}uczak, T., Szyma\'{n}ska, E.: A parallel randomized algorithm for finding a maximal independent set in a linear hypergraph. Journal of Algorithms 25-2, pp. 311-320 (1997)

\bibitem{naor2}
Naor, J., Naor, M.: Small-bias probability spaces: efficient construction and applications, SIAM Journal of Computing 22-4, pp. 835-856 (1993)

\bibitem{schudy-sviridenko}
Schudy, W., Sviridenko, M.: Concentration and moment inequalities for polynomials of independent random variables. Proceedings of the 23rd annual ACM-SIAM Symposium on Discrete Algorithms (SODA), pp. 437-446 (2012)


\bibitem{sivakumar}
  Sivakumar, D.: Algorithmic derandomization via complexity theory. Proceedings of the 34th ACM Symposium on Theory of Computing (STOC), pp. 619-626 (2002)
  
\bibitem{syoudai}
Syoudai, T., Miyano, S.: An NC algorithm for computing a maximal independent set in a hypergraph of bounded valence. RIFIS Technical report 68, pp. 1-7 (1993)

\bibitem{vu2}
Vu, V.: On the concentration of multivariate polynomials with small expectation. Random Structures \& Algorithms 16-4, pp. 344-363 (2000)

\bibitem{vu1} Vu, V.: Concentration of non-Lipschitz functions and applications. Random Structures \& Algorithms 20-3, pp. 262-316 (2002)
\end{thebibliography}
\end{document}